\newtheorem{problem}{Problem}
\newcommand{\simon}[1]    {\textcolor{red}{\emph{#1}}}
\newcommand{\Bi} {\mathrm{Bi}}
\newcommand{\eps}{\varepsilon}
\newcommand{\WS}{{\sf WS}}
\newtheorem{conjecture}{Conjecture}
\newtheorem{theorem}{Theorem}[section] 
\newtheorem{corollary}[theorem]{Corollary}
\newtheorem{proposition}[theorem]{Proposition}
\newtheorem{definition}[theorem]{Definition}
\newtheorem{example}{Example}[section]
\newtheorem*{theorem*}{Theorem}
\newtheorem*{proposition*}{Proposition}
\newtheorem*{lemma*}{Lemma}
\newtheorem*{remark*}{Remark}
\newtheorem*{example*}{Example}
\newcommand*{\rom}[1]{\expandafter\@slowromancap\romannumeral #1@}
\definecolor{darkred}{rgb}{1, 0.1, 0.3}
\definecolor{darkblue}{rgb}{0.1, 0.1, 1}
\begin{document}

\newcommand{\footremember}[2]{%
   \footnote{#2}
    \newcounter{#1}
    \setcounter{#1}{\value{footnote}}%
}
\newcommand{\footrecall}[1]{%
    \footnotemark[\value{#1}]%
} 

\title{\Large Approximating 1-Wasserstein Distance between Persistence Diagrams by Graph Sparsification\footremember{support}{This work has been partially supported by NSF grants CCF 1839252 and 2049010}}
\author{\large Tamal K. Dey\footremember{school}{Purdue University, Department of Computer Science, USA}\and \large Simon Zhang\footrecall{school}}
\date{}

\maketitle







\begin{abstract} Persistence diagrams (PD)s play a central role in topological data analysis. This analysis requires
computing distances among such diagrams such as the $1$-Wasserstein distance. Accurate computation of these PD distances for large data sets that render large
diagrams may not scale appropriately with the existing methods. The main source of difficulty
ensues from the size of the bipartite graph on which a matching needs to be computed for
determining these PD distances. We address this problem by making several algorithmic and computational observations. By exploiting the  metric on the plane, we can obtain\textbf{, in theory, a near-linear fully polynomial-time approximation scheme.} This is \textit{theoretically optimal} assuming the $(1+\epsilon)$-approximate EMD conjecture in constant dimension, which is that the EMD problem on the plane cannot be approximated by a PTAS in time $O(\frac{1}{\epsilon^2}n)$ up to polylog factors. 
In our implementation, first, taking advantage of the distribution of PD points, we \emph{condense} them thereby decreasing the number of nodes in the graph for computation. The increase in point multiplicities is addressed by reducing the matching problem to a min-cost flow problem on a transshipment network. Second, we use Well Separated Pair Decomposition to sparsify the graph to a size that is linear in the number of points. Both node and arc sparsifications contribute to
the approximation factor where we leverage a lower bound given by the Relaxed Word Mover's distance. Third, we eliminate bottlenecks during the sparsification procedure by introducing parallelism. 
Fourth, we develop an open source software called 
\footnote{\url{https://github.com/simonzhang00/pdoptflow}}{{\sc PDoptFlow}} 
based on our algorithm, exploiting parallelism by GPU and multicore. We perform extensive experiments and show that the actual empirical error is very low.
We also show that we can achieve high performance at low guaranteed relative errors, improving upon the state of the arts. 
\end{abstract}

\section{Introduction}
\label{sec: introduction}
A standard processing pipeline in topological data analysis (TDA) converts data, such as a point cloud or a function on it, to
 a topological descriptor called the persistence diagram (PD) by a persistence algorithm 
\cite{edelsbrunner2000topological}. See books~\cite{DW22,edelsbrunner2010computational} for
a general introduction to TDA. 
Two PDs are compared by
 computing a distance between them. By the stability theorem of PDs \cite{cohen2007stability,skraba2020wasserstein}, close distances between shapes
 or functions on them imply close distances between their PDs; thus, computing
 diagram distances efficiently becomes important. It can help an increasing list of
 applications such as clustering \cite{davies2020fuzzy, lacombe2018large, marchese2017k}, classification \cite{carriere2017sliced,le2021entropy,som2018perturbation} and deep learning \cite{ wangtopogan} that have found the use of topological persistence for analyzing data.
 The 1-Wasserstein ($W_1$) distance is a common distance to compare persistence diagrams; {\sc Hera} \cite{kerber2017geometry} is a widely used open source software for this. Others include \cite{morozov2012dionysus,scikittda2019}. In this paper, we develop a new approach and its efficient software implementation for computing
 the $1$-Wasserstein distance called here the $W_1$-distance that improves the state-of-the-art.

 \section{Background}
  Here are the notations we will use in this paper. The terminology in the meaning will be discussed when the notations are introduced.
 \begin{table}[h] {
	
	\centering
\begin{tabular}{
|p{4cm}|p{10.0cm}|}
 \hline
 \multicolumn{2}{|c|}{\small Notations} \\
 \hline
 
 Symbol & Meaning \\ 
 \hline
 $\mathbb{R}, \mathbb{N}, \mathbb{Z}$ & Denote the real, natural and integer numbers \\
 $\mathbb{R}^+=\{x \in \mathbb{R}: x>0\}$ & Denote the positive reals\\
 $A,B$ & input PDs\\
 
 $\tilde{A}, \tilde{B}$ & multiset of points on $\mathbb{R}^2$ (nondiagonal points of $A$ and $B$)\\
 
 $\Delta$ & set of diagonal points \\
 
 $\tilde{A}_{proj}, \tilde{B}_{proj}$ & multisets of projections of $\tilde{A}, \tilde{B}$ to $\Delta$ \\
 
 $\hat{A}, \hat{B}$ & sets of points corresponding to 
 $\tilde{A}, \tilde{B}$ \\
 
 $\bar{a}, \bar{b}$ & virtual points that represent $\tilde{A}_{proj}$ and 
 $\tilde{B}_{proj}$\\
 
 $\hat{A}^{\delta}$, $\hat{B}^{\delta}$ & $\delta$ condensation of $\hat{A}$ and $\hat{B}$ \\
 
 $\sigma, c, f$ & supply, cost and flow functions of a transshipment network \\
 
 $L, \delta$ & a lower bound to the $W_1$-distance, additive error \\
 
 {\sc WCD}, {\sc RWMD} & word centroid distance, relaxed word movers distance \\
 
 $s, \eps, n$ & sparsification factor, theoretical relative error, and $|\tilde{A} \cup \tilde{B}|$\\
 
 $G(A,B)$ & bipartite transportation network on $\hat{A} \cup \{\bar{b}\}$ and $\hat{B} \cup \{\bar{a}\}$\\
 
 $G_{\delta}$ & $G(\hat{A}^{\delta} \cup \{\bar{b}\}, \hat{B}^{\delta} \cup \{\bar{a}\})$ \\
 
 $\WS_s(\hat{A}^{\delta}\cup\hat{B}^{\delta})$ & s-WSPD on $(\hat{A}^{\delta}\cup\hat{B}^{\delta})$ \\
 
 $\WS_s^{PD}(A^{\delta},B^{\delta})$ & sparsified transshipment network induced by $\WS_s(\hat{A}^{\delta}\cup\hat{B}^{\delta})$ \\
 
 $W_1(A,B)$ & ground truth $W_1$-distance
 \\
 \hline
 
 
 \end{tabular}
 \caption{Notations used in this paper.}
 \label{table: notation}
 }
 \end{table}
 
 We discuss here some of the basic background concepts from complexity theory. We will first  discuss asymptotic analysis. This is used to measure complexity at scale. We also discuss approximation algorithms in the context of asymptotic complexity. \subsection{Asymptotic Analysis}
 We define a \textbf{(multivariate)  monomial} in $d$ variables of degree $(k_1,...,k_d), k_i \in \mathbb{R}^+, \forall i= 1,...,d$ is a function $f: \mathbb{R}^d \rightarrow \mathbb{R}$ of the form:
 \begin{equation}
     f(x_1,...,x_d)=C_{k_1,...,k_d}\Pi_{i=1}^d x_i^{k_i}, C_{k_1,...,k_d} \in \mathbb{R}
 \end{equation}
 Traditionally a monomial is defined with integer powers on the variables $x_i, i=1,...,d$. 
 For asymptotic analysis, however, we only care about the large scale behavior of such functions. This is why the integer power assumption is not needed. 
 \begin{definition}
     We define a \textbf{(multivariate) polynomial} as a sum of finitely many multivariate monomials. 
 \end{definition}
 A (multivariate) polynomial composed with a logarithm, we denoted this (multivariate) function as a "\textbf{polylog}."

 When doing algorithmic analysis, we are interested in the behavior of the algorithmic complexity at scale. If we have a function $g: \mathbb{N} \rightarrow \mathbb{N}$ that computes the complexity of an algorithm in terms of "input size" $n \in \mathbb{N}$, we can describe its complexity by an asymptotic bound by a simpler function. This simpler function is usually a polynomial on the input size. We will also discuss when this polynomial has a polylog multiplicative factor.

 The conventional upper bound for complexity is given by big-O notation:
 \begin{definition}
 \textbf{Big-O Asymptotics}
 
 For two functions $g: \mathbb{N} \rightarrow \mathbb{N}$ and $f: \mathbb{N} \rightarrow \mathbb{N}$ we have that:
 \begin{equation}
     g(n)=O(f(n)) \text{ iff }
 \end{equation}
 \begin{equation} \exists C \in \mathbb{R}: C>0, \exists N \in \mathbb{N}, \forall n \geq N, g(n) \leq C f(n)
 \end{equation}
  \end{definition}
    When the simpler function $f$ upper bounds $g$ without leaving any positive constant in the large $n$ limit, we have little-o notation:
  \begin{definition}
 \textbf{Little-o Asymptotics}
 
 For two functions $g:  \mathbb{N} \rightarrow \mathbb{N}$ and $f: \mathbb{N} \rightarrow \mathbb{N}$ we have that:
 \begin{equation}
     g(n)=o(f(n)) \text{ iff }
 \end{equation}
 \begin{equation} \frac{g(n)}{f(n)} \rightarrow 0, n \rightarrow \infty
 \end{equation}
  \end{definition}
  For lower bounds, we often use Big-Omega notation:
  \begin{definition}
 \textbf{Big-Omega Asymptotics}
 
 For two functions $g:  \mathbb{N} \rightarrow \mathbb{N}$ and $f: \mathbb{N} \rightarrow \mathbb{N}$ we have that:
 \begin{equation}
     g(n)=\Omega(f(n)) \text{ iff }
 \end{equation}
 \begin{equation} \exists N \in \mathbb{N}, \forall n \geq N, g(n) \geq C f(n)
 \end{equation}
  \end{definition}
  We will use the following notation for parameterized algorithms. The complexity of such algorithms are determined by a parameter $\varepsilon \in \mathbb{R}$ along with the usual input size $n$:
  \begin{definition}
 \textbf{Big-O Asymptotics up to PolyLog Factors in the input size and Polynomial Factors in a Parameter}
 
 For two functions $g: \mathbb{R} \times \mathbb{N} \rightarrow \mathbb{N}$ and $f: \mathbb{R} \times \mathbb{N} \rightarrow \mathbb{N}$ we have that:
 \begin{equation}
     g(\epsilon,n)=\tilde{O}(f(\frac{1}{\epsilon},n)) \text{ iff }
 \end{equation}
 $\exists k \in \mathbb{N}, \exists C \in \mathbb{R}: C>0, \exists h \text{ a polynomial of order $k$ respectively},$
 \begin{equation} \exists N \in \mathbb{N}, \forall n \geq N, g(\epsilon,n) \leq C h(\log_2(n))f(\frac{1}{\epsilon},n)
 \end{equation}
  \end{definition}
\subsection{Approximation Scheme}
We formally define what an optimization problem is and optimization algorithms that compute optimization problems.

\begin{definition}
    A \textbf{(minimization/maximization)  Optimization Problem} is defined by a triple $(I,S,c)$ of data instances, a solution space and a cost function $c: S \rightarrow \mathbb{R}$. For every data instance $x \in I$ there is some corresponding solution $s(x) \in S$ where $c(s(x)) \in \mathbb{R}$ is minimized/maximized.

\end{definition}
In order to solve an optimization problem, we define optimization algorithms that can provide solutions to data instances of an optimization problem.
\begin{definition}
An \textbf{Optimization Algorithm} $\mathcal{A}$ is a function that on data instances $x \in I$ that computes a corresponding solution $s(x) \in S$ of an optimization problem $\mathcal{P}=(I,S,c)$. An optimization algorithm is \textbf{value-returning} if it only computes the number $c(s(x)) \in \mathbb{R}$
\end{definition}
We measure the (time/space) \textbf{complexity} of an algorithm by the number of operations the algorithm must perform for a given input size. The \textbf{asymptotic complexity} of an algorithm is the asymptotic behavior of the (time/space) complexity of an algorithm.

We can also approximate solutions of an optimization problem up to a distortion on the cost. We define such algorithms here:
\begin{definition}
An \textbf{Approximation Scheme} is an algorithm $\mathcal{A}$ for a problem $\mathcal{P}$ that is a function from parameters in $\mathbb{R}$ and data instances so that:
\begin{equation}
\forall x \in I, 
    \forall \epsilon \in \mathbb{R},  (1-\epsilon)\min_{s(x) \in S}c(s) \leq c(\mathcal{A}(x))\leq (1+\epsilon)\min_{s(x) \in S}c(s)
\end{equation}
\end{definition}
We give approximation schemes with certain polynomial asymptotic complexity bounds a name: 
\begin{definition}(\cite{garey1978strong})
    A \textbf{Polynomial-Time Approximation Scheme} (PTAS) is an approximation scheme where the asymptotic complexity of the approximation scheme is $\tilde{O}(n^k)$ for input size $n \in \mathbb{N}$ and $k \in \mathbb{R}$ a constant.

    A PTAS is called a \textbf{Fully Polynomial-Time Approximation Scheme} (FPTAS)   if the asymptotic complexity of the approximation scheme is $\tilde{O}(f(\frac{1}{\epsilon},n))$ for $f$ some multivariate polynomial with constant degree.
\end{definition}
We took the liberty of allowing for polylog factors in the PTAS and FPTAS definitions, although this is not traditional. 
  
 \section{Existing Algorithms and Our Approach}
 As defined in Section \ref{sec: w1formulation}, the $W_1$-distance between PDs is the assignment problem on a bipartite graph~\cite{burkard2012assignment}. This is the problem of minimizing the cost of a perfect matching on it. Thus, any algorithm that solves this problem~\cite{bernstein2021deterministic,bertsekas1988auction, brand2021minimum, jonker1986improving,lee2013path, manne2014new} can solve the exact $W_1$-distance between PDs problem.
 
 Different algorithms computing $W_1$-distance between PDs have been implemented for open usage which we briefly survey here. For $\eps>0$, the software {\sc hera}~\cite{kerber2017geometry} gives a (1+$\eps$) approximation to the $W_1$-distance by solving a bipartite matching problem using the auction algorithm in $\tilde{O}(\frac{n^{2.5}}{\eps})$ time.
 In software GUDHI \cite{maria2014gudhi}, the problem is solved exactly by leveraging a dense min-cost flow implementation from the POT library \cite{divol2019understanding,flamary2017pot} to solve the assignment problem. 
 The sinkhorn algorithm for optimal transport has a time complexity of $\tilde{O}(\frac{n^2}{\eps^2})$ \cite{chakrabarty2020better} but requires $O(n^2)$ memory and incurs numerical errors for small $\eps$. 
 The $O(n^2)$ memory requirement is demanding for large $n$, especially on GPU. It was shown in \cite{chen2021approximation} that the {\sc quadtree}~\cite{indyk2003fast} and {\sc flowtree}~\cite{backurs2020scalable} algorithms can be adapted to achieve a $O(\log A)$ approximation in $O(n  \log A)$ memory and time where $A$ is the aspect ratio, which is the ratio of the largest pairwise distance between PD points divided by their closest pairwise distance. One has no control over the error with this approach and in practice the approximation factor is large. The sliced Wasserstein Distance achieves an upper bound on the error with a factor of $2\sqrt{2}$ in $O(n^2 \log n)$ time \cite{carriere2017sliced}.
 Table \ref{table: complexities} shows the complexities and approximation factors of {\sc PDoptFlow} and other algorithms. 
 
\begin{table}[h] 
	
\begin{tabular}{
|p{5cm}|p{4cm}|p{3cm}|p{3cm}|}
 \hline
 \multicolumn{4}{|c|}{\small Algorithm Complexities and Approximation Factors} \\
 \hline
 
 Algorithm & Time (Sequential) & Memory & Approx. Bound\\ 
 \hline
 {\sc hera} & $\tilde{O}(\frac{n^{2.5}}{\eps})$ & $O(n)$ & $(1+\eps)$ \\
 
 dense MCF & $\tilde{O}(n^3)$ & $O(n^2)$ & exact\\
 
 sinkhorn & $\tilde{O}(\frac{n^2}{\eps^2})$ & $O(n^2)$ &$\eps$ abs. err\\
 
 flowtree, quadtree & $O(n\log A)$ & $O(n\log A)$ & $O(\log A)$ \\
 
 WCD, RWMD & $O(n)$ and $O(n\sqrt{n})$  & $O(n)$ & none\\ 
 
 sliced Wasserstein & $O(n^2\log n)$ & $O(n^2)$ & $2\sqrt{2}$
 \\
 
 PDoptFlow & $\tilde{O}(\frac{\hat{n}^2}{\eps^2})$ & $O(\max(\frac{\hat{n}}{\eps^2}, n))$ & $1+O(\eps)$\\
 \hline
 
 
 \end{tabular}
 \caption{$\hat{n}\leq n$ depends on $n$, the total number of points. A better bound of $\tilde{O}(\hat{n}^2/\eps)$ for {\sc PDoptFlow} is possible with a tighter spanner, see Section \ref{sec: WSPD} for the reasoning behind our spanner choice.}
 \label{table: complexities}
 \end{table}

\subsection{Our Approach}
We design an algorithm that achieves a $(1+O(\eps))$ approximation to $W_1$-distance. The input to our algorithm is two PDs and a sparsity parameter $s$ with $\eps= O(1/s)$. 

Our approach is centered around the following theoretical result for the complexity of computing the $W_1$-distance between PDs.
    \begin{theorem}\label{thm: W1-sparse-complexity}
    (Main Theorem for the Complexity of Computing the $W_1$-distance)
    
	Let $\varepsilon>0$ and $A=\tilde{A}\cup \Delta,B=\tilde{B}\cup \Delta$ two PDs of atmost $n$ points, 
    
The $W_1$-distance can be reduced to computing a min-cost flow on a sparse network. This can theoretically be computed in time $O(\frac{1}{\epsilon^2}n\log(n))$.
\end{theorem}
See Section \ref{sec: bounds} for a proof. 

This means that there exists a near linear value-returning FPTAS for the $W_1$ distance between persistence diagrams, which is our main claim.

The computational complexity in Theorem \ref{thm: W1-sparse-complexity} \textbf{is optimal} assuming the EMD problem on $\mathbb{R}^2$ cannot be approximated with a PTAS in $\tilde{O}(\frac{1}{\epsilon^2}n^{1+o(1)-\delta})$ time for any $\delta>0$. See Theorem \ref{thm: 1+epsilon-EMDlowerbound}.

\textbf{Implementation: }
In order to achieve this,  
the problem is reduced to a min-cost flow problem on a sparsified transshipment network with sparsification determined by $s$. We use two geometric ideas to sparsify the nodes and arcs of the transshipment network. We are able to construct networks of linear complexity while availing high parallelism. The min-cost flow problem is implemented with the network simplex algorithm. This lowers the inherent complexity of the network simplex routine, and enables us to gain speedup using the GPU and multicore executions over existing implementations. 

We apply a simple geometric idea called $\delta$-condensation (see Figure \ref{fig: delta-condensation}) to reduce the number of nodes in the transshipment network. This approach is synonymous to "grid snapping" \cite{fasy2018approximate,mumey2018indexing} or "binning" \cite{lacombe2018large} to a $\delta$-grid where $\delta$ depends on $s$. In order to maintain a $(1+O(\eps))$-approximation, we use a lower bound given by the Relaxed Word Mover's distance \cite{kusner2015word}. Its naive sequential computation can be a bottleneck for large PDs. We parallelize its computation with parallel nearest neighbor queries to a kd-tree data structure.


 In existing flow-based approaches ~\cite{cuturi2013sinkhorn,flamary2017pot} that compute the $W_1$-distance, the cost matrix is stored and processed incurring a quadratic memory complexity. We address this issue by reducing the number of arcs to $O(s^2n)$ using an $s$-well separated pair decomposition ($s$-WSPD) ($s$ is the algorithm's sparsity parameter) where $n$ is the number of nodes. This requires $O(s^2n)$ memory. Moreover, we parallelize WSPD construction in the pre-min-cost flow computation since it is a computational bottleneck. This can run in time $O(polylog(s^2n))$ according to \cite{wang2021fast}. Thus, the pre-min-cost flow computation of our algorithm incurs $O(s^2n)$ cost. 
 We focus on the $W_1$-distance instead of the general $q$-Wasserstein distance since we can use the triangle inequality for a guaranteed (1+$\eps$)-spanner \cite{cabello2005matching}.


 \begin{table}[h] {
	
	\centering
\begin{tabular}{
|p{5.0cm}||p{1.1cm}|p{1.0cm}|p{1.0cm}|p{1.0cm}|}
 \hline
 \multicolumn{5}{|c|}{\small $W_1$ Comput. Times (sec.) for Relative Error Bound} \\
 \hline
 
 &  {\sf bh} & {\sf AB} &  {\sf mri} & {\sf rips}\\ 
 
 \hline
 
 \scriptsize Ours (th. error $=0.5$  &\small 8.058s   &\small 0.67s&\small 18.0s & \small 48.4s\\ \scriptsize {\sc hera} (th. error $= 0.5$)   & \small 405.02s & \small 10.46s &\small 1010.7s& \small 207.38s \\
 \scriptsize Ours (th. error $= 0.2$)  & \small 29.15s & \small 1.52s& \small 51.5s & \small 154s\\
 \scriptsize {\sc hera} (th. error $=0.2$)  &\small 405.02s& \small 14.56s & \small 1256.4s & \small 342.1s\\
 \scriptsize {\sc S.H.} (emp. err. $\leq 0.5$)  &\small $>$32GB& \small 3.80s & \small $>$32GB & \small $>$32GB\\
 \scriptsize dense {\sc NtSmplx} &\small $>$.3TB& \small 5.934s & \small $>$.3TB & \small 354s\\
 \scriptsize Ours, Sq. th. err. $= 0.5$   & \small 9.13s & \small 0.88s &\small 29.3s& \small 80.16s \\
 \scriptsize Ours, Sq. th. err. $= 0.2$   & \small 35.69s & \small 3.03s &\small 88.85s& \small 266.48s \\
 \hline
 \end{tabular}
 \caption{Running times of {\sc PDoptFlow}, parallel (Ours) and sequential (Ours, Sq.), against {\sc hera}, GPU-sinkhorn ({\sc S.H.}), and Network-Simplex ({\sc NtSmplx}) for $W_1$-distance; $>32$ GB or $>.3$ TB means out of memory for GPU or CPU respectively.}
 \label{table: 1-wasserstein-results}
 }
 \end{table}

\subsection{Experimental Results}
Table~\ref{table: 1-wasserstein-results} and Table~\ref{table: 1-NN-accuracy-time} summarize
the results obtained by our approach. First, we detail these results and explain the algorithms later.
 For our experimental setup and datasets, the reader may refer to Section \ref{sec: experiments}. Experiments show that our methods accelerated by GPU and multicore, or even serialized, can outperform state-of-the-art algorithms and software packages. These existing approaches include GPU-sinkhorn~\cite{cuturi2013sinkhorn}, {\sc Hera}~\cite{kerber2017geometry}, and dense network simplex~\cite{flamary2017pot} ({\sc NtSmplx}). We outperform them by an order of magnitude in total execution time on large PDs and for a given low guaranteed relative error. Our approach is implemented in the software {\sc PDoptFlow}, published at \url{https://github.com/simonzhang00/pdoptflow}.
 
 
 We also perform experiments for the nearest neighbor(NN) search problem on PDs, see Problem \ref{prob: NN} in Section \ref{sec: NNproblemandbound}. This means finding the nearest PD from a set of PDs for a given query PD with respect to the $W_1$ metric. Following
 ~\cite{backurs2020scalable, chen2021approximation}, define recall@1 for a given algorithm as the percentage of nearest neighbor queries that are correct when using that algorithm for distance computation. We also use the phrase "prediction accuracy" synonymously with recall@1. Our experiments are conducted with the {\sf reddit} dataset; we allocate $100$ query PDs and search for their NN amongst the remaining $100$ PDs. We find that {\sc PDoptFlow} at $s=1$ and $s=18$ achieve very high NN recall@1 while still being fast, see Table \ref{table: 1-NN-accuracy-time}. Although at $s=1$ there are no approximation guarantees, {\sc PDoptFlow} still obtains high recall@1; see Figure \ref{fig: sxrelerror-Wasserstein} and Table \ref{table: 1-wasserstein-stats} for a demonstration of the low empirical error from our experiments. Other approximation algorithms~\cite{backurs2020scalable,chen2021approximation,kusner2015word} are incomparable in prediction accuracy though they run much faster.
 
 	 \begin{figure*}[t]
\includegraphics[width=1.0\columnwidth]{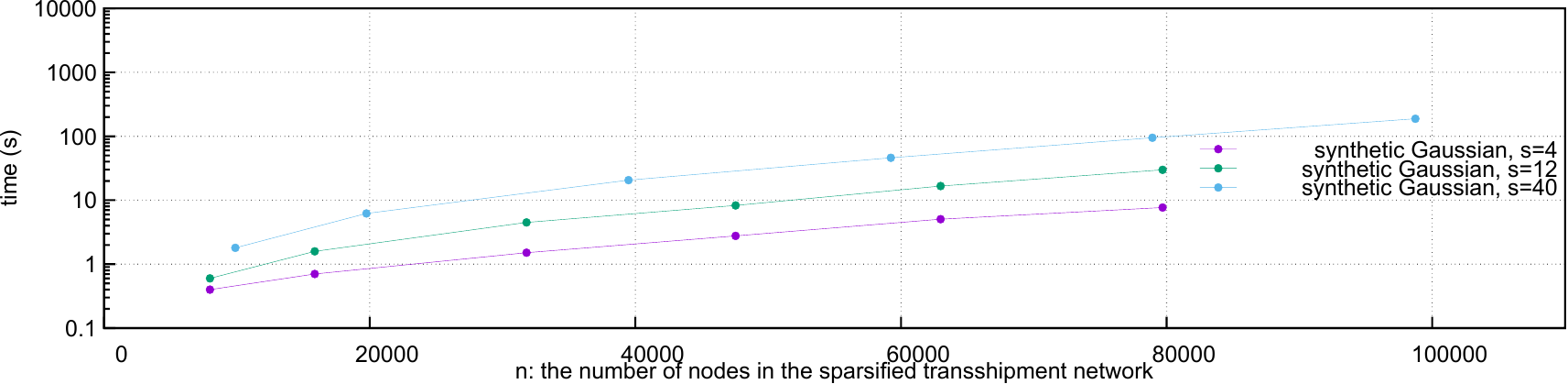}
\centering
		\caption{Plot of the empirical time (log scale) against the number of nodes $n$.} 
		\label{fig: nXtime}
		\centering
	\end{figure*}
	 In Table \ref{table: 1-wasserstein-results}, we present the total execution times for comparing four pairs of persistence diagrams: {\sf bh}, {\sf AB}, {\sf mri}, {\sf rips} from Table \ref{table: datasets} in Section \ref{sec: experiments}. 
	 The guaranteed relative error bound is given for each column. We achieve 50x, 15.6x, 56x and 4.3x speedup over {\sc hera} with the {\sf bh}, {\sf AB}, {\sf mri} and {\sf rips} datasets at a guaranteed relative error of 0.5. When this error is 0.2, we achieve a speedup of 13.9x, 9.6x, 24.4x, and 2.2x respectively on the same datasets. We achieve a speedup of up to 3.90x and 5.67x on the AB dataset over the GPU-sinkhorn and the {\sc NtSmplx} algorithm of POT respectively. Execution on {\sf rips} is aborted early by POT. We also run {\sc PDoptFlow} sequentially, doing the same total work as our parallel approach does. A slowdown of 1.1x-2.0x is obtained on {\sf bh} at $\eps=0.5$ and {\sf AB} at $\eps=0.2$ respectively compared to the parallel execution of {\sc PDoptFlow}. This suggests most of {\sc PDoptFlow}'s speedup comes from the approximation algorithm design irrespective of the parallelism. The Software from \cite{chen2021approximation} is not in Table \ref{table: 1-wasserstein-results} since its theoretical relative error ($2 \times$ (height of its quadtree)-1) \cite{backurs2020scalable, chen2021approximation} is not comparable to values ($0.5$ and $0.2$) from Table \ref{table: 1-wasserstein-results}. In fact, it has theoretical relative errors of $75, 41, 61, 41$ for {\sf bh}, {\sf AB}, {\sf mri}, {\sf rips} respectively. {\sc flowtree} \cite{chen2021approximation} is much faster than {\sc PDoptFlow} and is less accurate empirically. On these datasets, there is a 10.1x, 2.6x, 18.8x and 90.3x speedup against {\sc PDoptFlow($s=18$)} at $\eps=1.3$, for example.

\begin{table}[h] {
	\centering
\begin{tabular}{
|p{5cm}||p{4.0cm}|p{4.0cm}|}
 \hline
 \multicolumn{3}{|c|}{\small NN PD Search for $W_1$ Time and Prediction Accuracy } \\
 \hline
 
  & \scriptsize avg. time $\pm$ std. dev. & \scriptsize avg. recall@1 $\pm$ std. dev.\\
 \hline
 
 {{\sc quadtree}}: ($\eps=37.8 \pm 0.5$) &\small 0.46s $\pm$ 0.05s  &\small 2.2$\%$ $\pm$ 0.75$\%$\\
 {{\sc flowtree}}: ($\eps=37.8 \pm 0.5$) & \small 4.88s $\pm$ 0.2s  & \small 44$\%$ $\pm$ 4.05$\%$\\
 {{\sc WCD}} & \small 8.14s $\pm$ 2.0s & \small 39.8$\%$ $\pm$ 2.71$\%$\\
 {{\sc RWMD}}  &\small 17.16s $\pm$ 0.97s&\small 29.8$\%$ $\pm$  5.74$\%$\\
 {{\sc PDoptFlow}(s=1)}  &\small 62.6s $\pm$ 3.38s&\small 81$\%$ $\pm$ 5.2$\%$\\
 {PDoptFow(s=18)}: ($\eps=1.4$)  &\small 371.2s $\pm$ 85s &\small 95.4$\%$ $\pm$ 1.62$\%$\\
 {{\sc hera}: ($\eps=0.01$)}  &\small 2014s $\pm$ 12.6s&\small 100$\%$\\
 
 \hline
 \end{tabular}
 \caption{Total time for all $100$ NN queries and overall prediction accuracy over $5$ dataset splits of $50$/$50$ queries/search PDs; $\eps$ is the theoretical relative error. }
 \label{table: 1-NN-accuracy-time}
 }
 \end{table}
	
	Table \ref{table: 1-NN-accuracy-time} shows the total time for $100$ NN queries amongst 100 PDs in the {\sf reddit} dataset. The overall prediction accuracies using each of the algorithms are listed. See Section \ref{sec: 1-NN} for details on each of the approximation algorithms. 
	Table \ref{table: 1-NN-accuracy-time} shows that the algorithms 
	ordered from the fastest to the slowest on average on the {\sf reddit} dataset are  {\sc quadtree}~\cite{backurs2020scalable,chen2021approximation}, {\sc flowtree}~\cite{backurs2020scalable,chen2021approximation}, {\sc WCD}~\cite{kusner2015word}, {\sc RWMD}~\cite{kusner2015word}, {\sc PDoptFlow}($s=1$), {\sc PDoptFlow}($s=18$), and {\sc hera}~\cite{kerber2017geometry}. 
    
    Table \ref{table: 1-NN-accuracy-time} also ranks the algorithms from the most accurate to the least accurate on average as {\sc hera}, {\sc PDoptFlow}($s=18$), {\sc PDoptFlow}($s=1$), {\sc flowtree}, {\sc WCD}, {\sc RWMD}, and {\sc quadtree}. The average accuracy is obtained by $5$ runs of querying the {\sf reddit} dataset $100$ times. {\sc PDoptFlow}($s=18$) provides a guaranteed $2.3$-approximation which can even be used for ground truth distance since it computes 95\% of the NNs accurately. Furthermore, it takes only one-fourth the time that {\sc Hera} takes. 
    Figure \ref{fig: pareto-boundary} shows the time-accuracy tradeoff of the seven algorithms in Table \ref{table: 1-NN-accuracy-time} on the {\sf reddit} dataset.
	
	\begin{figure}[h]
    \includegraphics[width=0.8 \columnwidth]{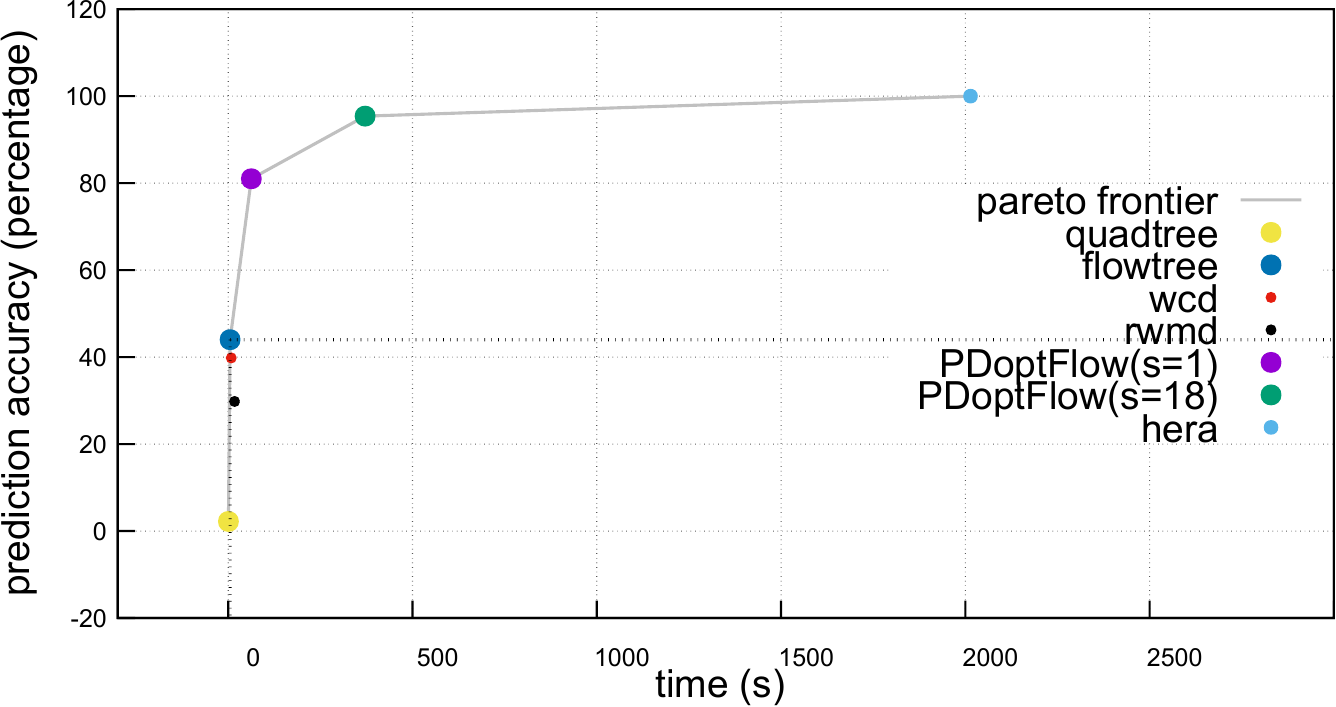}
\centering
		\caption{Pareto frontier of 7 algorithms showing the time and prediction accuracy tradeoff amongst the algorithms from Table \ref{table: 1-NN-accuracy-time} on the {\sf reddit} dataset. 
		}
		\label{fig: pareto-boundary}
		\centering
	\end{figure}
	
 Figure \ref{fig: nXtime} shows that our overall approach runs empirically in $O(s^2n^{1.5})$ time for small $s$ ($\leq 40)$. The empirical complexity improves with a smaller $s$. The datasets are given by synthetic 2D Gaussian point distributions on the plane acting as PDs. There are total of 10K, 20K, 40K, ... 100K points in the synthetic PDs. We achieve up to 20\% reduction in the total number of PD points by $\delta$-condensation. Section \ref{sec: empirical complexity} in the Appendix further explains the trend. 
 This partly explains the speedups that Table \ref{table: 1-wasserstein-results} exhibits. For empirical relative errors, see Table \ref{table: 1-wasserstein-stats} Section \ref{sec: experiments}.

	The rest of the paper explains our approach, implementation, and further experiments.

\section{1-Wasserstein Distance Problem }
\label{sec: w1formulation}
A persistence diagram is a multiset of points in the plane along with the points of infinite multiplicity on the diagonal line $\Delta$ (line with slope 1). The pairwise distances between diagonal points are assumed to be $0$. Each point $(b,d)$, $b \neq d$ in the multiset represents the birth and death time of a topological feature as computed by a persistence algorithm \cite{edelsbrunner2010computational, edelsbrunner2000topological}. Diagonal points are introduced to ascertain a stability \cite{chazal2014persistence,cohen2007stability,edelsbrunner2010computational} of PDs. 
\subsection{Topological Origins of Persistence Diagrams}\label{sec: topo-origin}
Another way to define a persistence diagram is by the M\"obius inversion~\cite{mobius1832besondere} of the rank of the induced homomorphisms of the homology functor: 
\begin{equation}
    \text{dim}(\text{im}(H_{\bullet}(K_{t_i}\hookrightarrow K_{t_{j}})))
\end{equation}
over a simplicial filtration, or nested sequence of simplicial complexes, viewed as a subcategory of the category of simplicial complexes:
\begin{equation}
    (\{K_{t_i}\}_{i=1}^n, \text{inc}: K_{t_i} \hookrightarrow K_{t_{i+1}}), t_i \in \mathbb{R}, i=1,...,n
\end{equation}
For proof details and a formal explanation of persistence over a functor, see \cite{zhang2025computing}.

Notice that for a persistence diagram $P= \tilde{P}\cup \Delta$, the size complexity of $\tilde{P}$ is 
\begin{equation}
    n= \sum_{p\geq 0}^D \beta_p, \beta_p=\text{dim}(H_p(K_{t_n}))
\end{equation}
We notice that for most persistence diagrams we have that: $\beta_0>\!\!> \beta_p, p\geq 1$. We presume that this is because simplices beyond points easily close up, e.g. for a filtration of Vietoris Rips complexes \cite{zhang2020gpu}.
\subsection{Problem Formulation}
Given two PDs $A= \tilde{A} \cup \Delta$ and $B= \tilde{B} \cup \Delta$,  
$\tilde{A},\tilde{B}\subset \mathbb{R}^2\setminus \Delta$ let

\label{eq: PD-W1}
$$W_1(A,B)= \adjustlimits \inf_{\Pi: A \rightarrow B} \sum_{x_1 \in A} (\|x_1-\Pi(x_1)\|_{2}),$$
where $\Pi$ is a bijection from $A$ to $B$. Notice that this formulation is slightly different from the ones in \cite{DW22,edelsbrunner2010computational} which takes the $l_1$ and $l_{\infty}$-norms respectively instead of the $l_2$-norm considered here. It is easy to check that this is equivalent to the following formulation:

\label{eq: PD-W1-decomposed}
$$\inf_{M \subset \tilde{A} \times \tilde{B}}{( \sum_{(x_1,x_2) \in M} (\|x_1-x_2\|_{2}) + \sum_{x_1 \notin \pi_1(M)}d_{\Delta}(x_1) + \sum_{x_2 \notin \pi_2(M)} d_{\Delta}(x_2))}
$$
where $M$ is a partial one-to-one matching between 
$\tilde{A}$ and $\tilde{B}$; $\pi_1$, $\pi_2$ are the projections of the matching $M$ onto the first and second factors, respectively; $d_{\Delta}(x)$ is the $l_{2}$-distance of $x$ to its nearest point on the diagonal $\Delta$. The triangle inequality does not hold among the points on $\Delta$. In that sense, this $W_1$-distance differs from the classical Earth Mover's Distance ({\sc EMD}) \cite{rubner2000earth} between point sets with the $l_{2}$ ground metric. Computing $W_1(A,B)$ (Problem \ref{prob: wasserstein}) reduces to the problem of finding a minimizing partial matching $M \subset \tilde{A} \times \tilde{B}$.
\begin{problem}\label{prob: wasserstein} 
Given two PDs $A$ and $B$, Compute $W_1(A,B)$.
\end{problem}

\subsection{Matching to Min-Cost Flow}\label{sec: reduction}
Let $\tilde{A}_{
proj}$, $\tilde{B}_{proj}$ be the sets of points in $\Delta$ nearest (in $l_{2}$-distance) to $\tilde{A}$, $\tilde{B}$, respectively.
Define the bipartite graph $\Bi(A,B)=(U_1 \dot{\cup} U_2, E)$ where $U_1:=\tilde{A} \cup \tilde{B}_{proj}$
and $U_2:=\tilde{B} \cup \tilde{A}_{proj}$. Define the point $p_{proj}$ to be the nearest point in $l_2$-distance to $p$ in $\Delta$ and let 
\begin{eqnarray*}
E&=&(\tilde{A}\times \tilde{B})\cup\{ (p,p_{proj})\}_{p \in \tilde{A}} \cup \{(q_{proj},q)\}_{q \in \tilde{B}} \cup (\tilde{A}_{proj}\times \tilde{B}_{proj}).
\end{eqnarray*}
The edge $e=(p,q)\in E$ has weight (i) $0$ if $e \in \tilde{A}_{proj}\times \tilde{B}_{proj}$, (ii) weight $\|p-q\|_{2}$ if $p\in \tilde{A}$, $q \in \tilde{B}$, (iii) weight $d_{\Delta}(p)$ if $q=p_{proj}$, and (iv) weight $d_{\Delta}(q)$ if $p=q_{proj}$. Because of the edges with cost 0, minimizing the total weight of a perfect matching on $Bi(A,B)$ is equivalent to finding a minimizing partial matching $M \subset \tilde{A}\times \tilde{B}$ and thus computing $W_1(A,B)$ in turn.
\begin{definition}\label{def: transhipment}
Let $G=(V,E,c,\sigma)$ be a \textbf{transshipment network}. This consists of nodes and directed edges called arcs where we have:
\begin{itemize}
    \item A supply function $\sigma: V(G) \rightarrow \mathbb{Z}$
    \item A cost function $c: V(G) \times V(G) \rightarrow \mathbb{R}^+$, and 
    \item An uncapacitated flow function
$f:V(G)\times V(G) \rightarrow \mathbb{R}$, which is defined by the following properites:
\begin{itemize}\label{eq: flow}
\item Nonnegativity: $f(u,v)\geq 0, \forall u,v \in V(G)$
\item Flow conservation out: $\sum_w f ( u , w ) = \lvert \sigma(u)\rvert$  for all  $u \in V(G)$, 
\item Flow conservation in:  $\sum_w f ( w , u ) = \lvert \sigma(u)\rvert$  for all  $u \in V(G)$ 
\end{itemize}
\end{itemize}
\end{definition}

Let $G=(V,E,c,\sigma)$ be a transshipment network made up of nodes and directed edges called arcs where we have a supply function $\sigma: V(G) \rightarrow \mathbb{Z}$, a cost function $c: V(G) \times V(G) \rightarrow \mathbb{R}^+$, and a flow function
$f:V(G)\times V(G) \rightarrow \mathbb{Z}$. Define the uncapacitated min-cost flow on $G$ as:
\label{eq: mincostflow}
$$
\label{eq: general-optimal-transport}
\displaystyle \min_{\Sigma_u f(u,v) =|\sigma(v)|, \Sigma_v f(u,v)= |\sigma(u)|, f(u,v)\geq 0} c(u,v) \cdot f(u,v), \text{where } (u,v) \in E(G).
$$

Now we describe a construction of the bipartite transshipment network $G(A,B)$
for two PDs $A$ and $B$. Intuitively, $G(A,B)$ is $Bi(A,B)$ with a set instead of multiset representation for the nodes. Let $\pi_A$ and $\pi_B$ denote the mapping
of the points in $\tilde{A} \cup \tilde{B}_{proj}$ and $\tilde{B} \cup \tilde{A}_{proj}$ respectively to the nodes in the graph $G(A,B)$. 
All points with distance $0$ are mapped to the same node by $\pi_A$ and $\pi_B$. Since the diagonal points $\tilde{A}_{proj}$ and $\tilde{B}_{proj}$ are assumed to have distance zero, 
all points in $\tilde{A}_{proj}$ map to a single node, say $\bar a=\pi_A( \tilde{A}_{proj})$. Similarly, all points in $ \tilde{B}_{proj}$ map to a single node, say $\bar b=\pi_B( \tilde{B}_{proj})$ (See Figure~\ref{fig: matchingiffflow}).
We call this \emph{$0$-condensation} because it does not perturb the non-diagonal PD points. All arcs to or from $\bar{a}$ or $\bar{b}$ form \emph{diagonal arcs}, which are used in our main algorithm.

\begin{figure}[h]
\centering
		\includegraphics[width=0.9\columnwidth]{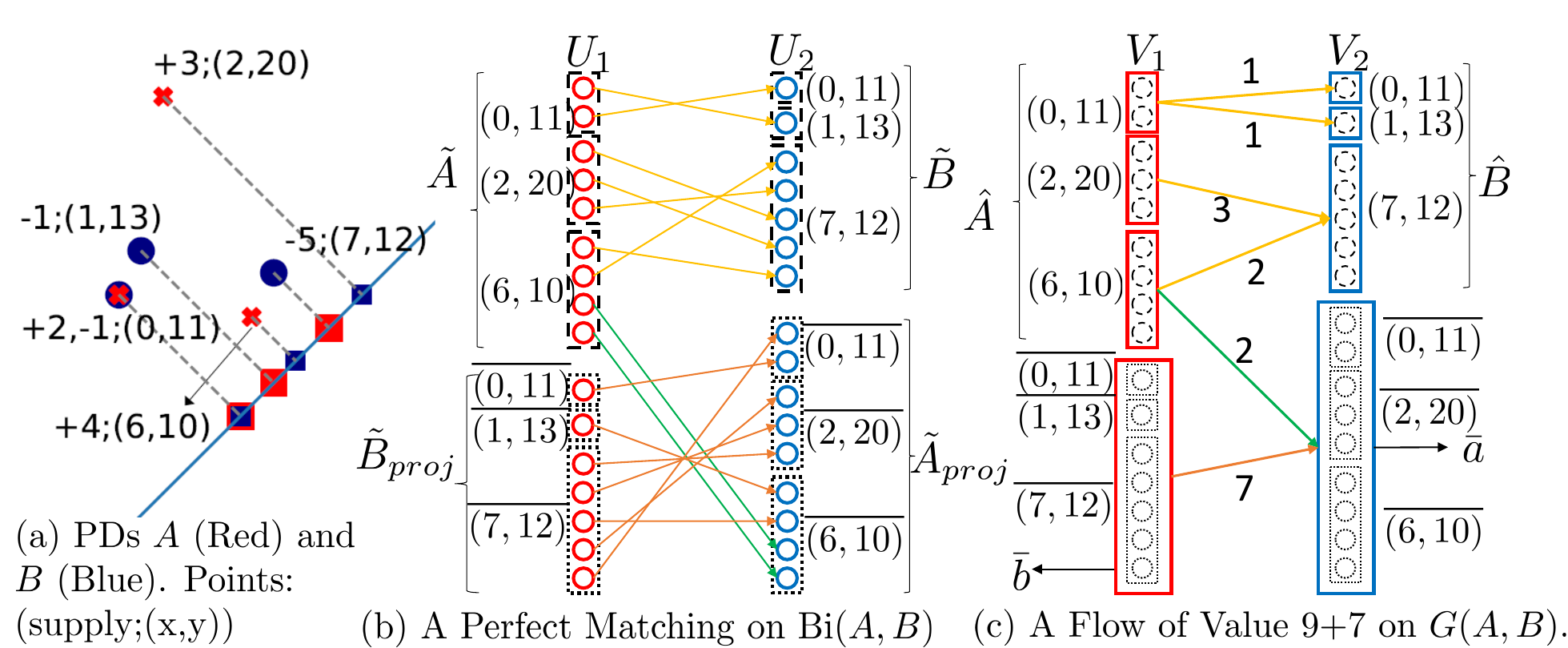}
\caption{(a) $-5;(7,12)$ means a supply of $-5$ units at point $(7,12)$. (b) $\Bi(A,B)$ with the nodes denoted by solid circles. (c) $G(A,B)$, nodes are the solid outer boxes. Supplies in $G(A,B)$ are set by the number of circles inside each box. In (b) and (c), barred-points e.g. $\overline{(7,12)}$ are projections to the diagonal.}
\label{fig: matchingiffflow}
\end{figure}
Let $\hat{A}$ and $\hat{B}$ denote the set of nodes corresponding to the non-diagonal points, that is, $\hat{A}=\pi_A(\tilde{A})$, $\hat{B}=\pi_B(\tilde{B})$. 
The nodes of $G(A,B)$ are $(\hat{A} \cup \{\bar{b}\}) \dot{\cup} (\hat{B} \cup \{\bar{a}\})$. The vertices of the transshipment network $G(A,B)$ are assigned supplies $\sigma(u)= |\pi_A^{-1}(u)|$ for $u \in U_1$ and $\sigma(v)= -|\pi_B^{-1}(v)|$ for $v \in U_2$. Intuitively, negative supply at a node means that there is a demand for a net flow at that node, which corresponds to a point in $B$. The intuition for positive supply is analogous. 

\begin{proposition} \label{lemma: matching-iff-flow}
There is a perfect matching on $\mathrm{Bi}(A,B)$ with $|\tilde{A}|=n_1$ and $|\tilde{B}|=n_2$ iff there is a feasible flow of value $n_1+n_2$ in $G(A,B)$. 
\end{proposition}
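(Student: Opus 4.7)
The plan is to establish a correspondence between perfect matchings on $\mathrm{Bi}(A,B)$ and nonnegative integer feasible flows of value $n_1+n_2$ on $G(A,B)$, realized by pushing/pulling along the node-collapsing maps $\pi_A$ and $\pi_B$.

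For the forward direction ($\Rightarrow$), given a perfect matching $M$ on $\mathrm{Bi}(A,B)$, I will define
$$f(u,v) \;=\; |\{(p,q)\in M : \pi_A(p)=u,\ \pi_B(q)=v\}|$$
for every arc $(u,v)$ of $G(A,B)$. Three things need to be checked: (i) $f$ is supported on the arc set of $G(A,B)$, which holds because every edge $(p,q)\in E(\mathrm{Bi}(A,B))$ induces, by construction, an arc $(\pi_A(p),\pi_B(q))$ in $G(A,B)$; (ii) the supply constraints are met, i.e.\ $\sum_v f(u,v) = |\pi_A^{-1}(u)| = \sigma(u)$ for $u$ on the $U_1$-side and the analogous identity on the $U_2$-side, which follows because $M$ saturates every vertex of $\mathrm{Bi}(A,B)$ exactly once; (iii) the total flow value satisfies $\sum_{(u,v)} f(u,v) = |M| = |U_1| = n_1+n_2$.

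For the reverse direction ($\Leftarrow$), given a nonnegative integer flow $f$ of value $n_1+n_2$ meeting the supply constraints, I will reconstruct a perfect matching $M$ by distributing the $f(u,v)$ units back to preimages. For an arc $(u,v)$ with $u\in\hat A$ and $v\in\hat B$, the preimage edge set is the full bipartite product $\pi_A^{-1}(u)\times\pi_B^{-1}(v)\subset \tilde A\times\tilde B$, so any pairing of $f(u,v)$ distinct $p$-preimages with $f(u,v)$ distinct $q$-preimages produces valid matching edges. For an arc of the form $(u,\bar a)$ with $u\in\hat A$, I pair each chosen $p\in\pi_A^{-1}(u)$ with its own designated projection $p_{\mathrm{proj}}\in\tilde A_{\mathrm{proj}}=\pi_A^{-1}(\bar a)$, which is an admissible edge of $\mathrm{Bi}(A,B)$; the symmetric rule handles arcs $(\bar b,v)$. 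Arcs $(\bar b,\bar a)$ are realized using the complete bipartite edge set $\tilde A_{\mathrm{proj}}\times\tilde B_{\mathrm{proj}}$. Flow conservation guarantees that the number of units leaving (resp.\ entering) each vertex on the $U_1$-side (resp.\ $U_2$-side) equals $|\pi_A^{-1}(u)|$ (resp.\ $|\pi_B^{-1}(v)|$), so every vertex of $\mathrm{Bi}(A,B)$ is matched exactly once, yielding a perfect matching of size $n_1+n_2$.

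The main subtlety I expect lies in the restricted diagonal edge set: edges of $\mathrm{Bi}(A,B)$ incident to $\tilde A_{\mathrm{proj}}$ are only the specific pairs $(p,p_{\mathrm{proj}})$ rather than the full bipartite product $\tilde A\times \tilde A_{\mathrm{proj}}$, so the reverse step must pair each $p$ with its own designated projection. This is nonetheless fine because distinct elements of $\pi_A^{-1}(u)$ have distinct designated projections inside $\pi_A^{-1}(\bar a)$, and the supply constraint at $u$ guarantees $f(u,\bar a)\le \sigma(u)=|\pi_A^{-1}(u)|$, so such a pairing always exists; the symmetric argument on the $B$-side closes the proof.
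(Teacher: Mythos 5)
Your proof is correct and takes essentially the same route as the paper's: push the matching forward along $\pi_A,\pi_B$ to obtain the flow, and decompose the integral flow on each arc back into unit flows between preimage points for the converse. Your explicit handling of the restricted diagonal edges $(p,p_{\mathrm{proj}})$ in the reverse direction spells out a detail the paper's terser decomposition argument leaves implicit, but it is the same argument.
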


\begin{proof} 

$\Rightarrow$ Any perfect matching $\mu$ on $\Bi(A,B)$ can be converted to a feasible flow on $G(A,B)$ by assigning a flow between $u \in U_1$ and $v \in U_2$ equal to the number of pairs $(p,\mu(p))$ with $p \in \pi^{-1}(u)$ and $\mu(p) \in \pi^{-1}(v)$. The supplies on $G(A,B)$ are met because of the way $G(A,B)$ is constructed.
The value of the flow for the conversion is $n_1$+$n_2$ since there were that many pairings in the perfect matching.

$\Leftarrow$ Given a feasible flow of value $n_1+n_2$ on $G(A,B)$, we obtain a matching on $\Bi(A,B)$ by observing that we can decompose any feasible flow on arc $(u,v) \in \hat{A} \cup \{\bar{b}\} \times \hat{B} \cup \{\bar{a}\}$, into unit flows from $\pi^{-1}_A(u)$ to $\pi^{-1}_B(v)$ with no pair repeating any point from  other pairs. Each unit flow corresponds to a pair in the matching. Since the flow has $n_1+n_2$ flow value, there must be the $n_1+n_2$ pairings in the matching, making it perfect. 
\end{proof}

Problem \ref{prob: wasserstein} reduces to a min-cost flow problem on $G(A,B)$ by Proposition \ref{lemma: matching-iff-flow}. A proof based on linear algebra can be found in \cite{lacombe2018large}. 

\section{Approximating 1-Wasserstein Distance}
\label{sec: W1ApproxAlg}

In this section we design a $(1+O(\eps))$-approximation algorithm for Problem \ref{prob: wasserstein} that first sparsifies the bipartite graph $G(A,B)$ with an algorithm incurring a cost of $\tilde{O}_{\eps}(n)$, where $\tilde{O}_{\eps}$ hides a polylog dependence on $n$ and a polynomial dependence on $\frac{1}{\eps}$. Due to the node and edge sparsification, we must then use the min-cost flow formulation of Section \ref{sec: w1formulation} instead of a bi-partite matching for computing an approximation to the $W_1$-distance. We use the network simplex algorithm to solve the min-cost flow problem because it suits our purpose
aptly though theoretically speaking any min-cost flow algorithm can be used. 

\begin{figure}[h]
    \centering
		\includegraphics[width=0.5\columnwidth]{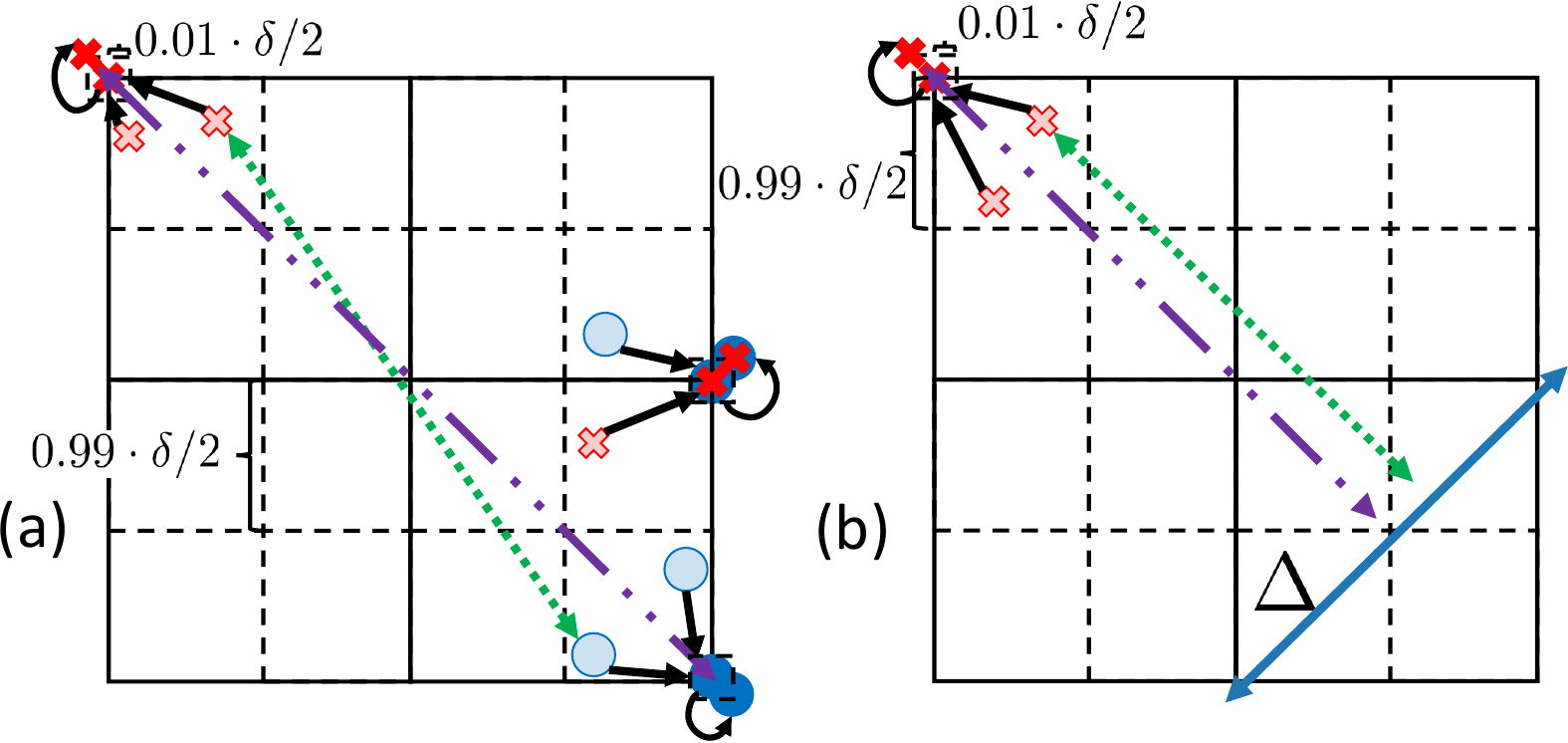}
		\caption{ $\delta$-condensation for (a) matched and (b) unmatched points. Points snapped to  their nearest $0.99 \delta$-grid point. 
		 Points are then perturbed in a $0.01 (\frac{\delta}{2})$ neighborhood. Green dotted pairwise distances change to new purple dotted and dashed pairwise distances. }
		\label{fig: delta-condensation}
		\centering
	\end{figure}
	
	\subsection{Condensation (Node Sparsification)}
\label{sec: condensation}

Figure \ref{fig: PDs} shows such evidences for voxelized data. 
We draw upon a common technique for rasterizing the plane by snapping points to an evenly spaced grid to decrease the number of points. 
As discussed in Section \ref{sec: N.S. behavior}, it is known that the network simplex algorithm performs better on a transshipment network with many different arc lengths than the one with many arcs having the same length. 
To avoid the symmetry induced by the lattice, we perturb randomly the combined points. For a $\delta>0$ and a fraction $k\geq 0.5$ (say $k=0.99$), we snap nondiagonal points to a $k\delta \cdot \mathbb{Z}\times k\delta \cdot \mathbb{Z}$ lattice (grid). Let
$\pi_{\delta}:\hat{A}\cup\hat{B} \rightarrow (k\delta \cdot \mathbb{Z})\times (k\delta \cdot \mathbb{Z}$) define this snapping of a point to its nearest $\delta$-lattice point where $\pi_\delta((x,y))=( k\delta\cdot round(\frac{x}{k\delta}), k\delta \cdot round(\frac{y}{k\delta}) )$. We follow the snapping by $\pi_{\delta}$ with a random shift of each condensed point by at most $\frac{1-k}{2} \cdot \delta$ in any of the $\pm x$ or $\pm y$ directions; see Figure \ref{fig: delta-condensation}. We call the entire procedure as "$\delta$-condensation" or "$\delta$-snapping". 
The aggregate of
the points snapped to a grid point is accounted for by a supply value assigned to it; see Algorithm \ref{alg: deltacond}. 

\begin{proposition} \label{prop: delta-condensation}
Let $A$ and $B$ be two PDs and $\eps > 0$. For $\delta:= \frac{2\eps L }{\sqrt{2} (|\tilde{A}|+|\tilde{B}|)}$ where $L \leq  W_1(A,B)$, let the snapping by $\pi_{\delta}$ followed by a $\delta\cdot (1-k)/2$ random shift on $A$ and $B$ produce $A^\delta$ and $B^\delta$ respectively. 
Then, 
 $(1-\eps)W_1(A,B)\leq W_1(A^\delta,B^\delta) \leq (1+\eps) W_1(A,B)$.
\end{proposition}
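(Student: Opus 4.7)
I would prove the two-sided bound by first controlling the per-point $\ell_2$ displacement induced by $\delta$-condensation and then propagating this bound through an optimal partial matching via the triangle inequality.

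\textbf{Step 1 (displacement bound).} A snap $\pi_\delta$ onto the lattice of spacing $k\delta$ moves a point within its grid cell, so the $\ell_\infty$ displacement is at most $k\delta/2$. The subsequent axis-aligned random shift contributes at most $(1-k)\delta/2$ in $\ell_\infty$. Hence each non-diagonal point moves by at most $\eta := \delta/\sqrt{2}$ in $\ell_2$; diagonal points are fixed, and $d_\Delta(\cdot)$ of a shifted point changes by at most $\eta$ by the $1$-Lipschitzness of the distance-to-$\Delta$ function.

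\textbf{Step 2 (transfer an optimal matching).} Let $M^*$ be an optimal partial matching that realizes $W_1(A,B)$ in the decomposed form of Section \ref{sec: w1formulation}. Define its elementwise transfer $M^{*\delta}$ on $A^\delta,B^\delta$ by pairing $x_1^\delta$ with $x_2^\delta$ whenever $(x_1,x_2)\in M^*$ and sending every unmatched point to its diagonal projection. The triangle inequality gives $|\|x_1^\delta-x_2^\delta\|_2 - \|x_1-x_2\|_2| \leq 2\eta$ on every matched pair, and a cost change of at most $\eta$ on every diagonal assignment. Summing yields a total change bounded by $2\eta|M^*| + \eta(|\tilde{A}|+|\tilde{B}|-2|M^*|) = \eta(|\tilde{A}|+|\tilde{B}|)$, so $W_1(A^\delta,B^\delta) \leq W_1(A,B) + \eta(|\tilde{A}|+|\tilde{B}|)$. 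Running the symmetric argument from an optimal matching on $A^\delta,B^\delta$ transferred back to $A,B$ gives the reverse inequality, hence $|W_1(A^\delta,B^\delta) - W_1(A,B)| \leq \eta(|\tilde{A}|+|\tilde{B}|)$.

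\textbf{Step 3 (plug in $\delta$).} Substituting $\delta = 2\eps L/[\sqrt{2}(|\tilde{A}|+|\tilde{B}|)]$ and $\eta = \delta/\sqrt{2}$ gives the additive error $\eta(|\tilde{A}|+|\tilde{B}|) = \eps L \leq \eps W_1(A,B)$, which is exactly the stated two-sided $(1\pm\eps)$ bound. The one point of care, rather than a genuine obstacle, is multiset bookkeeping: several points of $\tilde{A}$ may snap into the same lattice cell, so $A^\delta$ and $B^\delta$ must be treated as multisets and the transfer of $M^*$ must preserve fiber cardinalities — which is consistent with the PD/bijection formulation of $W_1$ adopted in Section \ref{sec: w1formulation}, and is also the reason the random perturbation is included (it keeps the condensed points geometrically distinct and thereby avoids symmetry-induced pathologies in the downstream network simplex routine).
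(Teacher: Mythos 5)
Your proposal is correct and follows essentially the same route as the paper: bound the per-point $\ell_2$ displacement by $\delta/\sqrt{2}$, propagate it through an optimal (partial) matching to get an additive error of $2\eta$ per matched nondiagonal pair and $\eta$ per point sent to the diagonal, observe that the total collapses to $\eta(|\tilde{A}|+|\tilde{B}|)$ independently of the matching structure, and substitute $\delta$ to convert this into $\eps L\leq\eps W_1(A,B)$. The paper phrases the last step via $m_1+m_2/2=(|\tilde{A}|+|\tilde{B}|)/2$, but this is the same computation; your explicit two-directional transfer and the multiset-bookkeeping remark are minor tightenings of the same argument.
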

\begin{proof} 
After applying $\pi_{\delta}$, each point moves in a $\frac{\sqrt{2}  \delta}{2}$ neighborhood. Thus for any pair of nondiagonal points $p \in \tilde{A}$ and $q \in \tilde{B}$, the $l_2$-distance between the two points shrinks/grows at most by $\frac{2 \sqrt{2} k \delta}{2}$ units. A $\frac{(1-k) \delta}{2}$-perturbation contributes to an error of $\frac{2 \sqrt{2} (1-k)\delta}{2}$ units for the $l_2$-distance between $p$ and $q$. Thus, for a pair of nondiagonal points the additive error incurred is $\sqrt{2} \delta$ units.  Furthermore, for any nondiagonal point in either diagram, its distance to $\Delta$ can shrink/grow by at most $\frac{\sqrt{2} k \delta }{2} +\frac{\sqrt{2} (1-k) \delta}{2}=\frac{\sqrt{2} \delta}{2}$ units.    

Let $m_1$ be the number of pairs of matched nondiagonal points and $m_2$ be the number of unmatched nondiagonal points. Let the $\delta$-condensation of A and B be $A^{\delta}, B^{\delta}$ and let $\delta'= \sqrt{2} (m_1 + \frac{m_2}{2})\delta$. To reach the conclusion
of the proposition, we want $\delta$ to induce a relative error of $\eps$ for $W_1(A^\delta, B^\delta)$ with respect to $W_1(A, B)$ satisfying the following inequalities:
\begin{equation}
\label{eq: delta-to-eps-relerror}
(1-\eps) W_1(A,B) \leq W_1(A,B) - \delta' \leq  W_1(A^{\delta},B^{\delta}) \leq W_1(A,B) + \delta' \leq (1+\eps) W_1(A,B).
\end{equation}
Observe that $m_1+\frac{m_2}{2}=\frac{ (|\tilde{A}|+|\tilde{B}|)}{2}$. Also, we have that $L \leq W_1(A,B)$. These
together constrain $\delta$ to satisfy 
$\sqrt{2} (m_1+\frac{m_2}{2}) \delta =\sqrt{2}\frac{ (|\tilde{A}|+|\tilde{B}|)}{2} \delta\leq \eps  L \leq \eps W_1(A,B)$, which gives the desired value of $\delta$
as stated.
\end{proof}

A lower bound $L$ from Proposition \ref{prop: delta-condensation} is needed in order to convert the additive error of $\delta$ to a multiplicative error of $1\pm\eps$. To find the lower bound $L$, we use the Relaxed Word Mover's distance ({\sc RWMD}) \cite{kusner2015word} that gives a lower bound for the min-cost flow of $G(A,B)$, hence for $W_1(A,B)$. There are many lower bounds that can be used such as those from \cite{atasu2019linear, kusner2015word}. However, we find {\sc RWMD} to be the most effective in terms of computational time and approximation in general. 

Recall that {\sc RWMD} is a relaxation of one of the two constraints of the min-cost flow problem. If we "relax" or remove the constraint $\sum_{v}f(u,v)= |\sigma(u)|, u\in \hat{B} \cup \{\bar{a}\}$ from the min-cost flow formulation, we obtain the following feasible flow to the min-cost flow with one of its constraints removed 

\begin{equation*}
\label{eq: f-low}
f^{low,A}(u,v)= \left\{
        \begin{array}{ll}
            |\sigma(u)| & \quad 
            \text{if }  v= \mathrm{argmin}_{v'} \,c(u,v') \\
            0 & \quad \text{otherwise}
        \end{array}
    \right.
\end{equation*} and evaluate $L_A:= \sum_{u,v} c(u,v) \cdot f^{low,A}(u,v) $. Since $W_1(A,B)$ is a feasible solution to the relaxed min-cost flow problem, $L_A \leq W_1(A,B)$. Relaxing the constraint $\sum_u f(u,v)= |\sigma(v)|, v \in \hat{A} \cup \{\bar{b}\}$, we can define $f^{low,B}(u,v)$ and $L_B$ similarly.

Our simple parallel algorithm involves computing $L:=\max(L_A,L_B)$, the {\sc RWMD}, by exploiting the geometry of the plane via a kd-tree to perform fast parallel nearest neighbor queries. For $L_A$, we first construct a kd-tree for $\hat{B}$ viewed as points in the plane, then proceed to search in parallel for every $u \in \hat{A}$, its nearest $l_2$-neighbor $v^*$ in $\hat{B}$ while writing the quantity $c(u,v^*) \cdot f^{low,A}(u,v^*)$ to separate memory addresses. Noticing that the closest point to $\bar{b}$, is $\bar{a}$ at cost $0$, it suffices to consider the points $\hat{A}$ to compute $L_A$. 
We then apply a sum-reduction to the array of products, taking $O(\log n)$ depth \cite{blelloch2010parallel}. We apply a similar procedure for $L_B$. See Algorithm \ref{alg: RWMD}. 
\begin{algorithm}[h]

\begin{algorithmic}[1]
\State build kd-tree on $\hat{B}$ using Euclidean distance on $\mathbb{R}^2$
\State compute $v^*= \mathrm{argmin}_{v \in \hat{B}} c(u,v)$  by NN search on $\hat{B}$ and store $f^{low,A}(u,v^*)$ for each $u \in \hat{A}$ in parallel

\State $L_A \gets$ compute sum-reduction of line $2$
\State $L_B \gets$ compute lines 1-3 with $\hat{A}$ and $\hat{B}$ swapped
\State
\Return $\max(L_A, L_B)$
\end{algorithmic}
\caption{{\sc RWMD}($\hat{A}, \hat{B}$, $c$)}
\label{alg: RWMD}
\end{algorithm}
Since the kd-tree queries each takes $O(\sqrt{n})$ sequential time, we obtain an algorithm with $O(n)$ processors requiring $O(\sqrt{n}+\log n)=O(\sqrt n)$ depth and $O(n\sqrt{n})$ work. 
\begin{algorithm}[H]

\begin{algorithmic}[1]
\Require{PDs $A,B$, $s>0$}
\State $(\hat{A},\bar{b},\sigma_{\hat{A}}, \hat{B},\bar{a},\sigma_{\hat{B}}) \gets$ 0-condense$(A,B)$

\State $L\gets RWMD(\hat{A},\hat{B}, c)$ \Comment{$c(\cdot, \cdot)$ from Section \ref{sec: w1formulation}}

\State $\eps \gets \frac{8}{s-4}$ if $s\geq12$ else $\eps\gets 1$;
$\delta \gets \frac{2\eps L}{\sqrt{2} (|\tilde{A}|+|\tilde{B}|)}$

\State $(\hat{A}^{\delta}$, $\hat{B}^{\delta})$ $\gets$ $(\pi_{\delta}(\hat{A}), \pi_{\delta}(\hat{B}))$ \Comment{snap points of $\hat{A}$}, $\hat{B}$ to a common $ 0.99\delta$-lattice

\State $\sigma_{\hat{A}^{\delta}\cup\hat{B}^{\delta} \cup \{\bar{a}\} \cup \{\bar{b}
\}}$ $\gets$$\begin{cases} 
      \sum_{u \in \pi^{-1}_{\delta}(v)}  \sigma(u)& v \in \hat{A}^{\delta} \cup \hat{B}^{\delta} \\
      \sigma(v) & v= \{\bar{a}\} \cup \{\bar{b}\} 
   \end{cases}$


\State perturb $\hat{A}^{\delta}\cup \hat{B}^{\delta}$ in a $\frac{0.01}{2}\delta$-radius square
\State \Return $(\hat{A}^{\delta}\cup\hat{B}^{\delta}, \sigma_{\hat{A}^{\delta}\cup\hat{B}^{\delta} \cup \{\bar{a}\} \cup \{\bar{b}
\}})$
\end{algorithmic}
 \caption{$\delta$-condensation}
\label{alg: deltacond}
\end{algorithm}

The algorithm for $\delta$-condensation is given in Algorithm \ref{alg: deltacond}. We first gather all the points based on their $x$ and $y$ coordinates called a $0$-condensation; see Section \ref{sec: w1formulation}. Then, we compute the {\sc RWMD} in order to compute $\delta$. This $\delta$ depends on an intermediate relative error of $\eps$ for $\delta$-condensation, which depends on the input $s$. The quantity $\eps$ is chosen to be less than $1$. 
In particular, we set $\eps \gets \frac{8}{s-4}$ if $s\geq12$ and $\eps\gets 1$ otherwise; see line 3 in Algorithm \ref{alg: deltacond}. Finally, we snap the points of $\hat{A}$ and $\hat{B}$ to the $\delta$-grid and then perturb the condensed points in a small neighborhood. The resulting sets of points are denoted $\hat{A}^{\delta}$ and $\hat{B}^{\delta}$. For each condensed point, we aggregate the supplies of points that are snapped to it. The aggregated supply function is denoted $\sigma_{\hat{A}^{\delta}\cup\hat{B}^{\delta} \cup \{\bar{a}\} \cup \{\bar{b}
\}}$. The bipartite transshipment network that could be constructed by placing arcs between all nodes from $A^{\delta}:=\hat{A}^{\delta}\cup\{\bar{b}\}$ to $B^{\delta}:=\hat{B}^{\delta}\cup\{\bar{a}\}$ is denoted as $G_{\delta}$:=$G(A^{\delta},B^{\delta})$. The cost $c_{\delta}$ is defined on arcs of  $G(A^{\delta},B^{\delta})$ as $c_{\delta}(u,v)= \|u-v\|_2$ for $u \in \hat{A}^{\delta}$ and $v \in \hat{B}^{\delta}$. The costs $c_{\delta}(u,\bar{a})$ and $c_{\delta}(\bar{b},v)$ are defined by the $l_2$-distances of $u$ and $v$ to $\Delta$ as in Section \ref{sec: reduction}. Furthermore, the supply on all points is defined by 
$\sigma_{A^{\delta}\cup B^{\delta}}$. Only the nodes and supplies of this network are constructed.
\subsection{For large $n$, $\delta$-condensation collects the heavy-hitter filtration values}
\label{sec: deltacond-assumption}
Proposition \ref{alg: deltacond} has $\delta=O(\frac{1}{n})$ where $n$ is the total number of points of both PDs. In particular,
assuming $W_1(A,B)$ is bounded, we have
$\delta \rightarrow 0$ as $n \rightarrow \infty$. In order for $\delta$-condensation to scale with $n$, we need to make an appropriate assumption about the empirical distribution of points for our PDs. Define the density for a point set $A \subseteq \mathbb{R}^2$ on a $\delta$-square grid as $\frac{\|A\|}{\|\Gamma_{\delta}\|}$ where $\|\Gamma_{\delta}\|$ is the number of nonempty grid cells with points from $A$.
\begin{proposition}\label{prop: density-ongrid-condensation}
For a PD $A$, the fraction of nodes eliminated from $A$ by $\delta$-condensation increases
if the density of a PD $A$ on a $\delta$-grid increases. 
\label{prop: densityfordeltacond}
\end{proposition}

\begin{proof}
For each grid point $p\in \Gamma_{\delta}$, all points in a $\delta$-square neighborhood centered at $p$ snap to $p$. These new cells partition the plane just like the original grid cells and are a translation of the original grid cells. We consider this translated grid as $\Gamma_{\delta}$, which can only affect the number of nonempty cells by at most a factor of $4$. Say a $\delta$-cell $i \in \Gamma_{\delta}$, $\delta$ depending on $\|A\|$, has $c_i$ points. 
We get that exactly $c_i$ points collapse into one point. Thus $c_i-1$ points are eliminated. Adding this up over all nonempty cells $i$, we get that the fraction of nodes eliminated from $A$ is: 
\begin{equation}
\frac{\sum_{i \in \Gamma_{\delta}} (c_i-1)}{\|A\|} = \frac{\|A\|-\|\Gamma_{\delta}\|}{\|A\|}
\end{equation}
It follows that if the density $\frac{\|A\|}{\|\Gamma_{\delta}\|}$ increases, we eliminate a larger fraction of nodes as claimed. 
\end{proof}
We can directly translate the sufficient condition for Proposition \ref{prop: density-ongrid-condensation} as saying that the grid size is sublinear in the number of points $\|\Gamma_{\delta}\|=o(\|A\|)$. 

We give some usages of Proposition \ref{prop: densityfordeltacond}. As discussed in Section \ref{sec: topo-origin}, we consider the case of filtration values (times) coming from $0$-dimensional simplices:
\begin{example}
In particular, for lower star filtrations on voxel based data, we have that there are only $2^8$ possible number of filtration values to fill up, up to infinitesimal perturbations from the data. 
We thus have, $\|\Gamma_{\delta}\|\leq 2^{16}$ for all $\delta$
, where $2^{16}$ is a counting bound on the number of pairs of filtration values that lie in $\mathbb{R}^2$.
Then, by Proposition \ref{prop: densityfordeltacond}, $\delta$-condensation scales well when $n$ is sufficiently large.

This also means that the PD stays under a constant size.
\end{example}
\begin{example}
For lower star filtrations defined on degree valued nodes of scale free networks, the degree distribution is given by the power law: $P(k) \sim k^{-\gamma}$, $2<\gamma<3$ a constant and $k$ the degree of any node. Thus, as $n \rightarrow \infty$, we sample at most $n$ times independently from this distribution. We show that the degrees sampled won't depend on the number of samples. 

Using the CDF of the power law, we get that:
\begin{equation}
    P(k<N(\gamma)) = 1-k^{-\gamma+1} \geq 0.99 \Rightarrow N(\gamma)= O(1-0.99)^{\frac{1}{-\gamma+1}}
\end{equation}
We have shown that with probability 0.99, each sample is bounded by some constant threshold $N(\gamma)= O(1-0.99)^{\frac{1}{-\gamma+1}}$ independent of $n$. Hence, $\|\Gamma_{\delta}\|$ is bounded w.h.p. and by Proposition \ref{prop: densityfordeltacond}, we have that $\delta$-condensation eliminates an eventually increasing proportion of nodes w.h.p. as $n \rightarrow \infty$.

This means that the size of the PD converges to a constant size with high probability.
\end{example}
\subsection{Well Separated Pair Decomposition(Arc Sparsification)}
\label{sec: WSPD}
The node sparsification of $G(A,B)$ gives $G_{\delta}$ whose arcs are further sparsified. Using Theorem $1$ in~\cite{cabello2005matching}, we bring the quadratic number of arcs down to a linear number by constructing a geometric $(1+\eps)$-spanner on the point set $\hat{A}^{\delta} \cup \hat{B}^{\delta}$. 
For a point set $P\subset\mathbb{R}^2$, let its complete distance graph be
defined with the points in $P$ as nodes where every pair $p,q\in P$, $p\not=q$, is joined by an edge with weight equal to $\|p-q\|_2$. Define a geometric $t$-spanner $S(P)$ as a subgraph of the complete distance graph of $P$ where for any $p,q \in P, p \neq q$,
the shortest path distance $d_{SP}(p,q)$ between $p$ and $q$ in $S(P)$ satisfies the condition $d_{SP}(p,q)\leq t\cdot \|p-q\|_2$.

We compute a spanner using the well separated decomposition $s$-WSPD~\cite{chan2008well,har2011geometric}. Notice that there are many other possible spanner constructions such as $\theta$-graphs~\cite{clarkson1987approximation,keil1988approximating} and others, e.g.~\cite{gudmundsson2002fast,le2020light}. However, experimentally we find that WSPD is effective in practice, and becomes especially effective when $s$ is small. 
The $\theta$-graphs, for example, can be an order of magnitude slower to compute as implemented in the CGAL software~\cite{fabri2009cgal}. This is theoretically justified by the $O(\log n)$ factor in the $O(n \log n)$ construction time of $\theta$-graphs when $n>1024$.
An $s$-WSPD is a well known geometric construction that approximates the pairwise distances between points by pairs of "$s$-well-separated" point subsets. Two point subsets $U$ and $V$ are $s$-well separated in $l_2$-norm if there exist two $l_{2}$ balls of radius $d$ containing $U$ and $V$ that have distance at least $d \cdot s$. 
An $s$-WSPD of a point set $P\subset \mathbb{R}^2$ is a collection of pairs of $s$-well separated subsets of
$P$ so that for
every pair of points $p,q\in P$, $p \neq q$,
there exists a unique pair of subsets $U,V$ in the $s$-WSPD with $U\ni p$ and $V\ni q$. 
Each subset in an $s$-WSPD is represented by an arbitrary but fixed point in the subset. We can construct a digraph $\WS_s(P)$ 
from the $s$-WSPD on $P$ by taking the point representatives as nodes and
placing biarcs between any two nodes $u,v$, that is, creating both arcs $(u,v)$ and $(v,u)$.
It is known \cite{chan2008well, har2011geometric} that $\WS_s(P)$, viewed as an undirected graph,
is a geometric t-spanner for $t= (s+4)/(s-4)$. Putting $t=(1+\eps)$, this gives 
$s=4+\frac{8}{\eps}$. It was recently shown in \cite{de2021local} that by taking leftmost points as representatives in the well separated subsets, one can improve $t$ to $1+\frac{4}{s}+\frac{4}{s-2}$. Furthermore, it is also known that $\WS_s(P)$ has $O(s^2 n)$ number of arcs where $n= |P|$.

\begin{figure*}[h]
		\includegraphics[width=1.1\columnwidth]{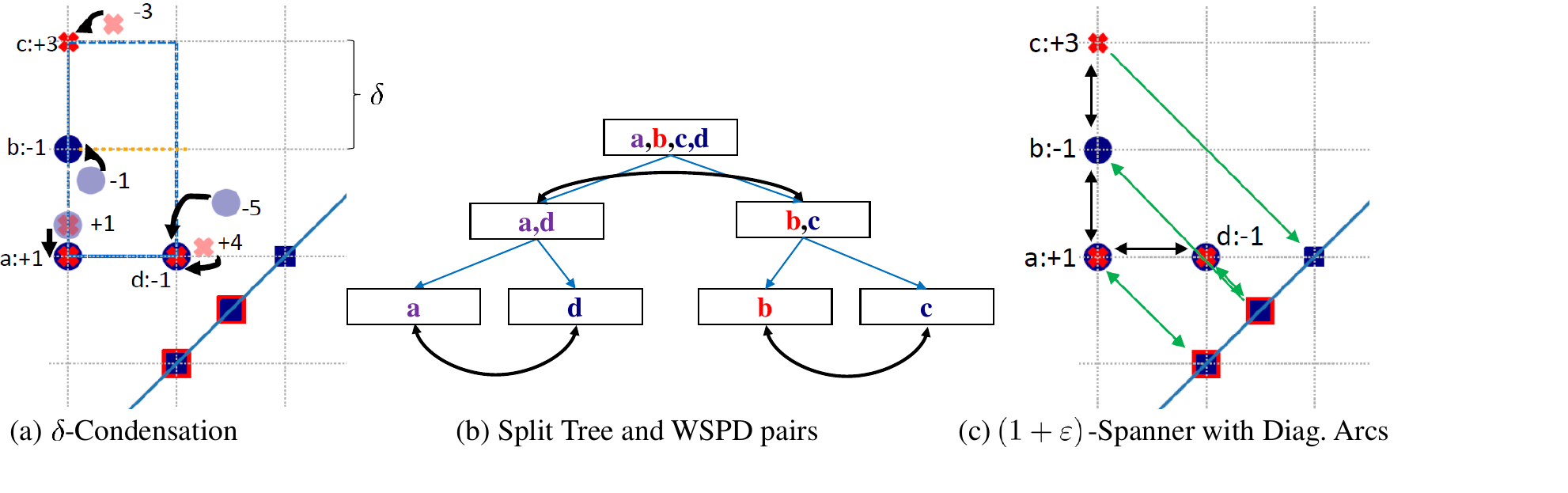}
		\centering
		\caption{Illustration of Algorithm \ref{alg: wasserstein-PDoptFlow}: (a) $\delta$-condensation for the example in Figure \ref{fig: matchingiffflow} with the split tree construction on $\hat{A}^{\delta} \cup \hat{B}^{\delta}$; (b) WSPD pairs (black biarcs) on the split tree from (a); and (c) the induced transshipment network from the WSPD with the green diagonal arcs included.}
		\label{fig: construction-1-Wasserstein}
		\centering
	\end{figure*}

Now we describe how we compute an arc sparsification of $G_{\delta}$. 
To save notations, we assume the points of $\hat{A}^{\delta}$ and $\hat{B}^{\delta}$, the $\delta$-condensation of $\hat{A}$ and $\hat{B}$ respectively, 
to be nodes also.
We compute a $(1+\eps)$-spanner $\WS_s(\hat{A}^{\delta}\cup\hat{B}^{\delta})$ via an $s$-WSPD on the points $\hat{A}^{\delta}\cup\hat{B}^{\delta}$.
Notice that this digraph has all nodes of $G_{\delta}$ except the two diagonal nodes $\bar{a}$ and 
$\bar{b}$ which we add to it with all the original arcs from $\bar{a}$ and to $\bar{b}$ having the cost same as in $G_{\delta}$. Now we assign supplies to nodes in $\WS_s(\hat{A}^{\delta}\cup\hat{B}^{\delta})$ as in $G_{\delta}$. 
There is a caveat here. 
It may happen that points from $\hat{A}^{\delta}$ and $\hat{B}^{\delta}$ overlap.
Two such overlapped points from two sets are represented with a single point having the supply equal to
the supplies of the overlapped points added together. 
Let $\WS_s^{PD}(A^{\delta},B^{\delta})$ denote this sparsified transshipment network. Adapting an argument
in~\cite{cabello2005matching} to our case, we have: 


\begin{theorem}\label{theorem: Wasserstein-approx-thm}
Let $f^*$ and $\bar f^*$ be the min-cost flow values in $G_{\delta}$ and $\WS_s^{PD}(A^{\delta},B^{\delta})$ respectively where
$s$ satisfies $\eps=\frac{4}{s}+\frac{4}{s-2}$ for some $\eps>0$.
Then $f^*$ and $\bar f^*$ satisfy $f^* \leq \bar{f}^* \leq (1+\eps)f^*$.
\end{theorem}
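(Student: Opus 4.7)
The plan is to leverage the $(1+\eps)$-spanner property of $\WS_s(\hat{A}^{\delta}\cup\hat{B}^{\delta})$ established by \cite{de2021local}, which gives for any two nodes $u,v\in\hat{A}^\delta\cup\hat{B}^\delta$ the bound $\|u-v\|_2\leq d_{SP}(u,v)\leq(1+\eps)\|u-v\|_2$, where $d_{SP}$ denotes shortest path in the spanner. Both the diagonal arcs (those incident to $\bar{a}$ or $\bar{b}$) and their costs are identical in $G_\delta$ and in $\WS_s^{PD}(A^\delta,B^\delta)$, so the argument only needs to treat the non-diagonal arcs.

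For the upper bound $\bar{f}^*\leq (1+\eps)f^*$, I would take an optimal flow on $G_\delta$ and transform it into a feasible flow on $\WS_s^{PD}(A^\delta,B^\delta)$ of cost at most $(1+\eps)f^*$. Since $G_\delta$ is bipartite, the optimal flow decomposes into unit flows along single arcs $(u,v)$. I keep every diagonal arc untouched, and for each unit of flow on a bipartite arc $(u,v)$ with $u\in\hat{A}^\delta$, $v\in\hat{B}^\delta$, I reroute that unit along a shortest path from $u$ to $v$ in the spanner. Summing over all such units produces a flow that satisfies the same supply/demand constraints (flow conservation is preserved since injected and extracted amounts at each node remain unchanged), and its total cost is at most $(1+\eps)\sum_{(u,v)}f(u,v)\|u-v\|_2=(1+\eps)f^*$. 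Since this is a feasible flow, $\bar{f}^*\leq(1+\eps)f^*$.

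For the lower bound $f^*\leq\bar{f}^*$, I would take an optimal flow on $\WS_s^{PD}(A^\delta,B^\delta)$ and produce a feasible flow on $G_\delta$ of cost at most $\bar{f}^*$. Using a standard flow decomposition, the optimal flow on $\WS_s^{PD}$ can be written as a nonnegative combination $\sum_i\phi_i\chi_{P_i}$ of unit flows along simple source-to-sink paths $P_i$ (any 0-cost cycles can be discarded), where each $P_i$ begins at some supply node $u_i\in\hat{A}^\delta\cup\{\bar{b}\}$ and ends at some demand node $v_i\in\hat{B}^\delta\cup\{\bar{a}\}$. For each such path, I define a direct arc $(u_i,v_i)$ in $G_\delta$ with flow $\phi_i$; the triangle inequality applied along $P_i$ (or just the spanner lower bound $\|u_i-v_i\|_2\leq d_{SP}(u_i,v_i)$) gives that the cost of this direct arc is no larger than the cost of $P_i$ in the spanner. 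Summing, the resulting flow on $G_\delta$ is feasible and has cost at most $\bar{f}^*$, whence $f^*\leq\bar{f}^*$.

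The main subtlety will be the flow decomposition step in the lower bound: I must ensure that decomposing the sparsified flow into source-to-sink paths respects the fact that intermediate nodes along a spanner path may themselves carry nonzero supply or demand, which requires the standard argument that any integral feasible flow admits such a decomposition while preserving total cost and flow conservation. A second minor point is to verify that the arcs $(u_i,v_i)$ created in $G_\delta$ are legitimate: this holds because $u_i$ lies on the supply side (i.e.\ in $\hat{A}^\delta\cup\{\bar{b}\}$) and $v_i$ lies on the demand side, so the bipartite arc exists with the expected Euclidean or diagonal cost as defined in Section~\ref{sec: reduction}.
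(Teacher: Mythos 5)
Your proposal is correct and follows essentially the same route as the paper's proof: the upper bound by rerouting the optimal flow of $G_\delta$ along shortest spanner paths (a feasible flow of cost at most $(1+\eps)f^*$ by the spanner property), and the lower bound by path-decomposing the optimal sparsified flow and collapsing each path onto the corresponding direct bipartite arc, whose cost is no larger. The subtleties you flag (feasibility at intermediate path nodes and legitimacy of the collapsed arcs) are exactly the points the paper also addresses, so no further changes are needed.
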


\begin{proof}
First, notice that the nodes of $\WS_s^{PD}(A^{\delta},B^{\delta})$ are exactly the same as in 
$G(A^{\delta},B^{\delta})= (A^{\delta} \dot\cup B^{\delta}, A^{\delta} \times B^{\delta}, c_{\delta}, \sigma_{A^{\delta} \cup B^{\delta}})$
except the
overlapped nodes. We can decompose the overlapped nodes back to their original versions in $\hat{A}^{\delta}$ and $\hat{B}^{\delta}$ with biarcs of $0$-distance between them. This will also restore the supplies at each node. This does not affect $\bar f^*$.
Let the cost $c_{\delta}(u,v)$ in $\WS_s^{PD}(A^{\delta},B^{\delta})$ be the $l_{2}$-distance between corresponding points of $u$ and $v$ for $u,v \in \hat{A}^{\delta} \sqcup \hat{B}^{\delta}$ (all non-diagonal points pairs). 
Furthermore, let $c_{\delta}(\bar{b},v), v \in \hat{B}^{\delta}$ and $c_{\delta}(u,\bar{a}), u \in \hat{A}^{\delta}$ have cost exactly as in $G(A^{\delta},B^{\delta})$. Recall that in $\WS_s^{PD}(A^{\delta},B^{\delta})$ there is no arc between $\hat{A}^{\delta}$ and $\bar{b}$ nor between $\hat{B}^{\delta}$ and $\bar{a}$. Treating the costs on the arcs as weights, let the shortest path distance between $u$ and $v$ be $d_{SP}^{\WS}(u,v)$ on $\WS_s^{PD}(A^{\delta}, B^{\delta})$.
We already have a $(1+\eps)$-spanner $\WS_s(\hat{A}^{\delta} \cup \hat{B}^{\delta})$, and adding the nodes $\bar{a}$ and $\bar{b}$ with the diagonal arcs to form $\WS_s^{PD}(A^{\delta},B^{\delta})$ still preserves the spanner property, namely 

$$d_{SP}(\bar{b},v) = c_{\delta}(\bar{b},v)\leq (1+\eps) c_{\delta}(\bar{b},v) \text{ for } v \in B^{\delta}$$ and 

$$d_{SP}(u,\bar{a}) = c_{\delta}(u,\bar{a})\leq (1+\eps) c_{\delta}(u,\bar{a}) \text{ for } u \in A^{\delta}.$$ Let $f$ and $\bar{f}$ denote the respective flows for $f^*$ and $\bar{f}^*$. We can now prove the conclusion of the theorem. \\ 

$f^* \leq \bar{f}^*$:
$\bar{f}$ can be decomposed into flows along paths from nodes in $A^{\delta}$ to nodes in $B^{\delta}$. 
One can get a flow $\hat{f}$ on $G(A^{\delta}, B^{\delta})$ from $\bar f$ by considering a flow
on every bipartite arc $(u,v)$ in $G(A^{\delta},B^{\delta})$ which equals the path decomposition flow from $u$ to $v$ in $\WS^{PD}_s(A^{\delta},B^{\delta})$. We have 
$$f^*=\sum_{(u,v)\in G(A^{\delta},B^{\delta})} c_{\delta}(u,v) \cdot f_{uv} \leq \sum_{(u,v) \in G(A^{\delta},B^{\delta})} c_{\delta}(u,v) \cdot \hat{f}_{uv} \leq \sum_{(u,v) \in A^{\delta} \times B^{\delta}} d_{P}^{\WS}(u,v) \cdot \hat{f}_{uv}  =\bar{f}^*,$$
where $d_{P}^{\WS}(u,v)$ is the path distance on $\WS_s^{PD}(A^{\delta},B^{\delta})$ as determined by the flow decomposition. 
The leftmost inequality follows since $\hat{f}$ is a feasible flow on $G(A^{\delta},B^{\delta})$ and the rightmost inequality follows since any path length between two nodes $u$ and $v$ is bounded from below by the direct distance $c_{\delta}(u,v)$ between the points they represent. The last equality follows by the flow decomposition.\\

$\bar{f}^* \leq (1+\eps)f^*$: 
$$\bar{f}^* \leq \sum_{(u,v)\in A^{\delta} \times B^{\delta}} d_{SP}^{\WS}(u,v) \cdot f_{uv} \leq \sum_{(u,v) \in G(A^{\delta},B^{\delta})}(1+\eps)c_{\delta}(u,v)\cdot f_{uv} = (1+\eps)f^*.$$
The leftmost inequality follows since the flow $f$ of $G(A^{\delta},B^{\delta})$ sent across shortest paths forms a feasible flow on $\WS_s^{PD}(A^{\delta},B^{\delta})$. To check this, notice that the supplies are all satisfied for every node in $\WS_s^{PD}(A^{\delta},B^{\delta})$. Any intermediate node of a shortest path between $u \in A^{\delta}$ and $v \in B^{\delta}$ gets a net change of 0 supply. The rightmost inequality follows because $\WS_s^{PD}(A^{\delta},B^{\delta})$ still satisfies the $(1+\eps)$-spanner property as mentioned above.
\end{proof}

\textbf{$s$-WSPD Construction:}
In order to construct an $s$-WSPD, a hierarchical decomposition such as a split tree or quad tree is constructed. We build a split tree due to its simplicity and high efficiency. A split tree can be computed sequentially with any of the standard algorithms in \cite{callahan1995decomposition,chan2008well, har2011geometric} that runs in $O(n \log n)$
time. It is not a bottleneck in practice. This is because there is only $O(n)$ writing to memory for constructing the tree. A simple construction of the split tree $T$ starts with a bounding box containing the input point set followed by a recursive division that splits a box into two halves by dividing the longest edge of the box in the middle. 
The split tree construction for a given box stops its recursion when it has one point.	
	
   Sequential construction of a WSPD involves collecting all well separated pairs of nodes which represent point subsets from the split tree $T$. This is done by searching for descendant node pairs from each interior node $w$ in $T$. For each pair of descendant nodes $u$ and $v$ reached from $w$, the procedure recursively continues the search on both children of the node amongst $u$ and $v$ that has the larger diameter for its bounding box. When the points corresponding to a pair of nodes $u,v$ become well separated, we collect $(u,v)$ in the WSPD and stop recursion. 
   
   The construction of WSPD is the primary bottleneck before the min-cost flow computation. 
   The sequential computation incurs high data movement and also a large hidden constant
   factor in the complexity.
   To overcome these difficulties, we compute the WSPD in parallel while still preserving locality of reference, only using $n-1$ threads, and $O(n)$ auxiliary memory. We propose a simple approach on multicore that avoids linked lists or arbitrary pointers as in \cite{callahan1993optimal, callahan1995decomposition}. A unique thread is assigned to each internal node $w$ in the split tree $T$. Then, we write a prefix sum \cite{ladner1980parallel} of the counts of well separated pairs found by each thread. Following this, each thread on $w \in T$ re-searches for well separated pairs and independently writes out its well separated descendant nodes in its memory range as determined by the prefix sum. Recursive calls on split tree node pairs can also be run in parallel as in~\cite{wang2021fast}; doing so requires an unbounded data structure to store the pairs found by each thread such as a 2-layer tree with blocks at its leaves. Such a parallel algorithm can have worst-case depth of $O(\text{polylog}(s^2 n))$ and work complexity of $O(s^2 n)$. In practice, we can gain speedup in our simplified implementation, which does not issue recursive calls at interior nodes and thus has $O(s^2 n)$ depth. This is because significant work can arise at internal nodes near the leaves. For an illustration of the implementation, see Appendix.

\subsection{Min-cost Flow by Network Simplex}
\label{sec: network simplex}
Having constructed a sparsified transshipment network, we solve the min-cost flow problem on this network with an efficient implementation of the network simplex ({\sc NtSmplx}) algorithm.  

The {\sc NtSmplx} algorithm is a graph theoretic version of the simplex algorithm used for linear programming. It involves the search for basic feasible min-cost flow solutions. This is done by successively applying \emph{pivoting} operations to improve the objective function. A pivot involves an interchange of arcs for a spanning tree on the transshipment network. 
As observed in~\cite{kiraly2012efficient}, we also find that the pivot searching phase for the incoming arc during pivoting dominates the runtime of {\sc NtSmplx}. In particular, it is vital to have an efficient pivot searching algorithm: to quickly find a high quality entering arc that lessens the number of subsequent pivots. 
 Authors in~\cite{grigoriadis1986efficient} propose an interpolation between Dantzig's greedy pivot rule~\cite{dantzig2006linear} and Bland's pivot rule~\cite{bland1977new} by the block search pivot ({\sc BSP}) algorithm. This implementation for {\sc NtSmplx} is adopted in~\cite{dezsHo2011lemon}. 
 It is found empirically in~\cite{kiraly2012efficient} that the {\sc BSP} algorithm is very efficient, simple, and results in a low number of degenerate pivots in practice. 
 We use the {\sc BSP} algorithm in our implementation because of these reasons.
 
 Notice that if dynamic trees~\cite{tarjan1997dynamic} are used, the complexity of a pivot search can be brought down to $O(\log n)$ and thus {\sc NtSmplx} can run in time $\tilde{O}(s^2n^2)$~\cite{aggarwal1996faster, goldberg1988efficiency} on our WSPD spanner.


{\sc BSP} sacrifices theoretical guarantees for simplicity and efficiency in practice. During computation, degenerate pivots, or pivots that do not make progress in the objective function
may appear. 
There is the possibility of \emph{stalling} or repeatedly performing degenerate pivots for exponentially many iterations. As Section \ref{sec: N.S. behavior} in Appendix illustrates, stalling drives the execution to a point where no progress is made. However, our experiments suggest that, before stalling, {\sc BSP} usually arrives at a very reasonable feasible solution.

We observe that performance of {\sc NtSmplx} depends heavily on the sparsity of our network. Since a pivot involves forming a cycle with an entering arc and a spanning tree in the network, if the graph is sparse there are few possibilities for this entering arc. 


\subsection{Approximation Algorithm}
The approximation algorithm is given in Algorithm \ref{alg: wasserstein-PDoptFlow}, which proceeds as follows. Given input PDs $A$ and $B$ and the parameter $s> 2$, first we set $\eps= \frac{8}{s-4}$. We compute a $\delta$ according to Proposition \ref{prop: delta-condensation} using this $\eps$ for $s\geq 12$ and setting $\eps=1$ for $2<s<12$. Then, we perform a $\delta$-condensation and compute an $s$-WSPD 
via a split tree construction on $\hat{A}^{\delta} \cup \hat{B}^{\delta}$. 

We then compute $\WS_s(\hat{A}^{\delta}\cup\hat{B}^{\delta})$ from the $s$-WSPD. It is a $(1+\eps')$-spanner for $s>2$ where $\eps'=\frac{4}{s}+\frac{4}{s-2}$. 
Diagonal nodes along with their arcs are added to this graph
as determined by $G_{\delta}$. This means that we add the nodes $\bar{a}$ and $\bar{b}$ and all arcs from $\hat{A}^{\delta}$ to $\bar{b}$ and $\bar{a}$ to $\hat{B}^{\delta}$. This produces $\WS_s^{PD}(A^{\delta},B^{\delta})$. Figure \ref{fig: construction-1-Wasserstein} shows our construction. The network simplex algorithm is applied to the sparse network $\WS_s^{PD}(A^{\delta},B^{\delta})$ to get a distance that approximates the min-cost flow value on $G_{\delta}$ within a factor of $(1+\eps')$ between inputs $A^{\delta}$ and $B^{\delta}$.  The algorithm still runs for $s>0$ instead of $s>2$ since we can still construct a valid transshipment network for optimization. However, there are no guarantees if $s\leq2$. Nonetheless, empirical error is found to be low and the computation turns out very efficient; see Section \ref{sec: experiments}.

\begin{algorithm}[H]

\begin{algorithmic}[1]
\Require{PDs: $A,B$, $s> 2$ the sparsity parameter, $\eps = \frac{8}{s-4}$ for $s\geq 12$ and $\eps= 1+\frac{8}{s}+\frac{8}{s-2}$ for $2<s<12$}
\Ensure{a $(1+O(\eps))$-approximation to $W_1$-distance}
    
    \State $(\pmb{P}$,$\sigma_{\pmb{P}}) \gets$ $\delta$-condensation$(A,B,s)$ \Comment{$\pmb{P}= \hat{A}^{\delta}\cup \hat{B}^{\delta}$} 
			
    \State $\pmb{T}\gets$ form-splittree($\pmb{P}$)
			
	\State nondiag-arcs $\gets$ form-WSPD$(\pmb{T},s)$ \Comment{$1+\eps$-spanner}
			
	\State diag-arcs $\gets$ form-diag-arcs$(\pmb{P})$ \Comment{diagonal arcs constructed as in Section \ref{sec: reduction}}
			
	\State $\pmb{G} \gets (\pmb{P}$, nondiag-arcs $\cup$ diag-arcs,~$\sigma_{\pmb{P}}$, $c:=$dists(nondiag-arcs $\cup$ diag-arcs))
	\Comment{
	Defn. \ref{eq: mincostflow}}
			
	\Return min-cost flow$(\pmb{G})$
	
\end{algorithmic}
\caption{Approximate $W_1$-Distance Algorithm}
\label{alg: wasserstein-PDoptFlow}
\end{algorithm}

	The time complexity of the algorithm is dominated by the computation of the min-cost flow routine. Thus, all the steps of our algorithm are designed to improve the efficiency of the {\sc NtSmplx} algorithm. Replacing {\sc NtSmplx} with the algorithm in \cite{brand2021minimum}, a complexity of $\tilde{O}(n s^2+n^{1.5})$ 
	can be achieved. 
	However, {\sc NtSmplx} is simpler, more memory efficient, has a reasonable complexity of $\tilde{O}(s^2n^2)$ \cite{tarjan1997dynamic}, and is very efficient in practice; see Figure \ref{fig: nXtime} and Figure \ref{fig: arcsXtime}  in Appendix.

 \section{Theoretical Bounds}
 \label{sec: bounds}
 By Theorem \ref{theorem: Wasserstein-approx-thm}, the spanner achieves a $(1+\frac{4}{s}+\frac{4}{s-2})$-approximation to the min-cost flow value on the $\delta$-condensed graph. A $\delta$-condensation results in an approximation of the
$W_1$-distance with a factor of $(1\pm (\frac{8}{s-4}))$ for $s\geq 12$ and $2$ for $2<s<12$. The factor $2$ for the range $2<s<12$ is obtained by putting $s=12$ in $\frac{8}{s-4}$ because $s\leq 12$ and we need $\frac{8}{s-4}>0$. The node and arc sparsifications together guarantee an approximation factor of  $((1+\frac{4}{s}+\frac{4}{s-2})(1\pm (\frac{8}{s-4}))) \leq 
(1+\eps)^2$
where $\eps= \frac{8}{s-4}$ and $s\geq 12$. For the range $2<s<12$, we have $2 (1+\frac{4}{s}+\frac{4}{s-2}) = 1+\eps$ where $\eps= 1+\frac{8}{s}+\frac{8}{s-2}$. We are thus guaranteed a $(1+O(\eps))$-approximation to the $W_1$-distance if $s>2$ as claimed
in Algorithm~\ref{alg: wasserstein-PDoptFlow}.
We state this as the following Corollary to Theorem \ref{theorem: Wasserstein-approx-thm}.
\subsection{Main Result}

\begin{corollary}
\label{corollary: approximationbound}
Let $\eps>0$ and define $s= 4+\frac{8}{\eps}$ for $s\geq 12$. 
Define $\delta$ in terms of $\eps$ as in Proposition \ref{prop: delta-condensation}.
Then, $\bar f^*$, the min-cost flow value of $\WS_s^{PD}(A^{\delta},B^{\delta})$, is a $(1+O(\eps))$-approximation of $W_1(A,B)$.
\end{corollary}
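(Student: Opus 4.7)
The plan is to combine the two approximation results established earlier, namely Proposition 3.1 (the $\delta$-condensation bound on $W_1$) and Theorem 3.2 (the WSPD bound relating the min-cost flow on $G_\delta$ to that on $\WS_s^{PD}(A^\delta,B^\delta)$), and then verify that for the prescribed choice $s = 4 + 8/\eps$, both incurred factors are $1+O(\eps)$.

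First, I would observe that by Proposition 2.1 (matching-iff-flow) together with the construction of $G_\delta = G(A^\delta, B^\delta)$ from Section 2.1, the min-cost flow value $f^*$ on $G_\delta$ equals $W_1(A^\delta, B^\delta)$. Proposition 3.1 then yields
\[
(1-\eps)\, W_1(A,B) \;\leq\; f^* \;\leq\; (1+\eps)\, W_1(A,B),
\]
provided $\delta$ is chosen as $\delta = 2\eps L/(\sqrt{2}(|\tilde{A}|+|\tilde{B}|))$ with $L \leq W_1(A,B)$, which is exactly the setting of the corollary.

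Next, applying Theorem 3.2 directly gives $f^* \leq \bar{f}^* \leq (1+\eps')\, f^*$, where $\eps' = \tfrac{4}{s} + \tfrac{4}{s-2}$. Chaining these two inequalities produces
\[
(1-\eps)\, W_1(A,B) \;\leq\; f^* \;\leq\; \bar{f}^* \;\leq\; (1+\eps')(1+\eps)\, W_1(A,B).
\]
It then remains to substitute $s = 4 + 8/\eps$ and check that $\eps' = O(\eps)$. A direct computation gives $\tfrac{4}{s} = \tfrac{\eps}{\eps+2}$ and $\tfrac{4}{s-2} = \tfrac{2\eps}{\eps+4}$, so $\eps' \leq 3\eps$ for all $\eps > 0$ (and in fact $\eps' \to 0$ as $\eps \to 0$). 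Hence $(1+\eps')(1+\eps) = 1 + O(\eps)$ and, on the other side, the lower bound $1-\eps$ is already $1-O(\eps)$, so $\bar{f}^*$ lies in the interval $[(1-O(\eps))W_1(A,B),\,(1+O(\eps))W_1(A,B)]$ as claimed.

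I do not expect a genuine obstacle here, since the proof is essentially the composition of two previously established bounds; the only care needed is bookkeeping to confirm that the product of the two multiplicative errors is $1 + O(\eps)$ under the specific relation $s = 4 + 8/\eps$. One minor subtlety worth explicitly noting is that Theorem 3.2 was stated with respect to the parameter relation $\eps = \tfrac{4}{s} + \tfrac{4}{s-2}$, so to avoid confusion between the two occurrences of the symbol $\eps$ I would rename the spanner-side parameter to $\eps'$ throughout the argument, as above, and only at the very end absorb it into a single $O(\eps)$ factor.
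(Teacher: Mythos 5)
Your proposal is correct and follows essentially the same route as the paper: the paper also obtains the corollary by multiplying the $\delta$-condensation factor $(1\pm\frac{8}{s-4})$ from Proposition \ref{prop: delta-condensation} with the spanner factor $(1+\frac{4}{s}+\frac{4}{s-2})$ from Theorem \ref{theorem: Wasserstein-approx-thm} and bounding the product by $(1+\eps)^2 = 1+O(\eps)$ for $s\geq 12$. Your explicit identification of $f^*$ with $W_1(A^\delta,B^\delta)$ via Proposition \ref{lemma: matching-iff-flow} and your renaming of the spanner parameter to $\eps'$ are just slightly more careful bookkeeping of the same argument.
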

This allows us to now prove the main theorem upon which our approach is centered. 
\begin{theorem}\label{thm: W1-sparse-complexity-insec}
    (Main Theorem for the Complexity of Computing the $W_1$-distance)
    
	Let $\varepsilon>0$ and $A=\tilde{A}\cup \Delta,B=\tilde{B}\cup \Delta$ two PDs of atmost $n$ points, 
    
The $W_1$-distance can be reduced to computing a min-cost flow on a sparse network. This can theoretically be computed in time $O(\frac{1}{\epsilon^2}n\log(n))$.
\end{theorem}
\textbf{Proof of Main Theorem: }
\begin{proof}
    According to Corollary \ref{corollary: approximationbound}, the sparse transhipment network $\WS_s^{PD}(A^{\delta},B^{\delta})$ has a min-cost flow that is a $(1+\varepsilon)$-approximation of $W_1$. We know that computing the min-cost flow can be computed in near linear time~\cite{brand2023deterministicalmostlineartimealgorithm} and constructing a hierarchical decomposition tree such as a kd-tree, quadtree or split tree takes $O(n\log(n))$ time. Constructing the WSPD geometric spanner takes time complexity of $O(\frac{1}{\varepsilon^2}n\log(n))$ Composing then gives the complexity as stated in the Theorem. 
\end{proof}
\subsection{Conditional Lower Bound for $W_1(A,B)$}
The $W_1$ distance between persistence diagrams can be viewed as a variation of the Earth mover's distance (EMD) problem from computational geometry.
We state the EMD problem here:
\begin{problem}(EMD)
    Let $V_1,V_2 \subseteq \mathbb{R}^d$ be two point sets of $d$-dimensional Euclidean space. The EMD problem seeks for the minimum value of the following optimization problem:
    \begin{equation}
        EMD(V_1,V_2) \triangleq \min_{\sigma: V_1 \rightarrow V_2}\sum_{u \in V_1} \|u-\sigma(u)\|_2
    \end{equation}
    where $\sigma: V_1\rightarrow V_2$ is a matching (injective map) between $V_1$ and $V_2$
\end{problem}
\begin{conjecture}(Constant Dimension EMD Conjecture)

For a constant $d\geq 2$,
    there is no $\delta>0$ where there is a deterministic algorithm that given
two lists of $n$ points from $\mathbb{R}^d$ can compute in $O(n^{2-\delta})$ time the EMD between these two lists.
\end{conjecture}
This conjecture appears reasonable since a perfect matching over a bipartite graph with an arbitrary cost matrix takes time $\Omega(n^2)$, the size of the input. 

Assuming the constant dimension EMD conjecture, we show through the technique of fine-grained reduction~\cite{vassilevskawilliams:LIPIcs.IPEC.2015.17} that the exact $W_1$ distance between persistence diagrams also cannot be subquadratic unless EMD can be solved in subquadratic time.
\begin{theorem}\label{thm: EMDfgr2W1}
    Let $n>0$ be an integer and let $\epsilon>0$

If the exact EMD on $\mathbb{R}^2$ and two point sets of size $n$ cannot be computed in time $O(n^{2+o(1)-\delta})$ for any $\delta>0$, then the computation of $W_1$ between two persistence diagrams of total size $n$ cannot be computed in time $O(n^{2+o(1)-\delta'})$ for some $\delta'>0$
\end{theorem}
\begin{proof}
We do a $(n^{2},n^{2})$-fine grained reduction from the exact EMD for $d=2$ to the $W_1$ problem between persistence diagrams.
    
    \textbf{The Reduction: }
    
    Given an input $A,B \subseteq \mathbb{R}^2$, 
    
    1. Compute $\text{diam}(A\cup B)= \max_{x,y \in A \cup B} \|x-y
    \|_2$. This takes time $O(n)$ time.

    2. Compute the displacement vectors 
    \begin{equation}
        \mathcal{D}= \{(\|p-p_{proj}(p)\|_2, (p-p_{proj}(p))_x)\}_{p \in A \cup B}
    \end{equation}
    consisting of (magnitude, direction) pairs where $p_{proj}: \mathbb{R}^2 \rightarrow \Delta$ is the projection map to the diagonal $\Delta= \{(x,x): x \in \mathbb{R}\}$ as given in Section \ref{sec: reduction} and $(p)_x$ is the $x$-coordinate of point $p$. This takes time $O(n)$.
    
    3. Amongst all the vectors $(m,r) \in \mathcal{D}$ with $r=(p-p_{proj}(p))_x \leq 0$ find the vector with the largest magnitude $m$, call this maximizer $m^*$. This takes time $O(n)$.
    
    4. Translate all points of point sets $A$ and $B$ by 
    \begin{equation}
        dd\triangleq (-(m^*+n(\epsilon+1) \text{diam}(A\cup B)),+(m^*+n(\epsilon+1) \text{diam}(A\cup B)))
    \end{equation}
    Call these translated point sets $A_t, B_t$. This takes time $O(n)$.

 \textbf{The Reduction Maintains Correctness: }

    \textbf{An optimal EMD matching iff an optimal $W_1$ matching}
    
    We claim that $EMD(A,B)=W_1(A_t,B_t)$, namely that the $EMD(A,B)$ and $W_1$ distances don't change under translation by $dd$. In fact, the witnesses to both problems are exactly equal:  

    Let $\sigma^*: A\rightarrow B$ be the optimal EMD matching and let $\sigma_t^*: A_t \rightarrow B_t$ be the partial matching witnessing $W_1(A_t,B_t)$. We claim that $\sigma^*=\sigma_t^*$. 
    
    This follows since if we introduce any matching $(p,p_{proj}(p)), p \in A_t \cup B_t$ to the diagonal into ${\sigma}_t^*$ by replacing a match $(p_t,q_t) \in A_t \times B_t$ by the two matches $(p_t,p_{proj}(p_t)), (q_t,p_{proj}(q_t))$ the new $W_1$ cost results in the following inequality:
    \begin{subequations}
    \begin{equation}
        \sum_{u_t \in A_t} \|u_t-{\sigma}^*_t(u_t)\|_2-\|p_t-q_t\|_2+\|p_t-p_{proj}(p_t)\|_2+\|q_t-p_{proj}(q_t)\|_2 
    \end{equation}
    \begin{equation}
        \geq \sum_{u_t \in A_t} \|u_t-{\sigma}^*_t(u_t)\|_2= W_1(A_t,B_t)
    \end{equation}
    \end{subequations}
    This follows since
    \begin{subequations}
    \begin{equation}
        \|p_t-p_{proj}(p_t)\|_2+\|q_t-p_{proj}(q_t)\|_2 \geq 2(n(\epsilon+1)\text{diam}(A\cup B)) 
        \end{equation}
    \begin{equation}
        \geq \text{diam}(A\cup B) \geq \|p_t-q_t\|_2,  \forall p_t,q_t \in A_t \times B_t
    \end{equation}
    \end{subequations}
    Thus $\sigma_t^*$ cannot involve any matchings to the diagonal and thus $W_1(A_t,B_t)$ reduces to the $EMD(A_t,B_t)$. Thus $\sigma^*=\sigma^*_t$.
\end{proof}
\subsection{Conditional Lower Bound for the $(1+\epsilon)$ case}
For an approximate EMD problem, namely a problem where the desired solution is near the original EMD, we can define the following.
A $(1+\epsilon)$-approximate EMD solution is defined by a transshipment network $G=(V_1\cup V_2, V_1\cup V_2 \times V_1\cup V_2, f,c,\mu)$ with uncapacitated flow function ${f}: V_1\cup V_2 \times V_1\cup V_2 \rightarrow \mathbb{R}^+$ as defined in Equation \ref{eq: flow}, for some cost function $c: V_1\cup V_2 \times V_1\cup V_2 \rightarrow \mathbb{R}^+$, and some supply function $\mu: V_1 \cup V_2 \rightarrow \mathbb{R}$ with $\mu(v)=1, \mu(w)=-1, \forall v,w \in V_1\times V_2$ where:
\begin{equation}\label{eq: approx-EMD}
    EMD(V_1,V_2) \leq \min_{f: V_1 \times V_2 \rightarrow \mathbb{R}^+, \text{ a flow }}\sum_{(u,v) \in V_1 \times V_2} c(u,v) f(u,v) \leq (1+\epsilon) EMD(V_1,V_2)
\end{equation}
This allows us to define the $(1+\epsilon)$-approximate EMD problem:
\begin{problem}
    The $(1+\epsilon)$-approximate EMD problem computes the value 
    
     $\min_{f: V_1 \times V_2 \rightarrow \mathbb{R}^+, \text{ a flow }}\sum_{(u,v) \in V_1 \times V_2} c(u,v) f(u,v)$ from Equation \ref{eq: approx-EMD}.
\end{problem} 
Within the fine-grained complexity framework \cite{williams2018some} we show that the $W_1$ distance between PDs and the EMD problem in the plane are reducible to each other in both the exact and $(1+\epsilon)$-approximate cases.  

It is known that a $(1+\epsilon)$-approximate EMD in $d$ dimensions can be computed by a randomized algorithm in time $O(n \text{poly}(\frac{1}{\epsilon},\log(n)))$ \cite{raghvendra2020near} as well as in deterministic time $\tilde{O}(n (\frac{1}{\epsilon}
\log(n))^{O(d)})$ \cite{agarwal2022deterministic}. Certainly the EMD can be solved in $d$ dimensions through a spanner followed by the near linear time min-cost flow algorithm of \cite{brand2023deterministicalmostlineartimealgorithm}, making a $(1+\epsilon)$-approximate EMD computable in $O(n^{1+o(1)}\frac{\log(n)}{\epsilon^2})$ time according to \cite{cabello2005matching}. This is, in fact, faster than the $O(n\frac{\log^2(n)}{\epsilon^2})$ time algorithm of \cite{agarwal2022deterministic} due to $n^{o(1)}=O(\log(n))$. We hypothesize that for any $\delta>0$ and any $\epsilon>0$, a $(1+\epsilon)$-approximate EMD in constant $d$ dimensions cannot be solved in time $O(n^{1+o(1)-\delta})$.

This is stated in the following conjecture:
\begin{conjecture}(Constant Dimension $(1+\epsilon)$-approximate EMD Conjecture)

For a constant $d\geq 2$,
    there is no $\delta>0$ such that for all $\epsilon>0$, there is a deterministic algorithm that given
two lists of $n$ points from $\mathbb{R}^d$ can compute in $\tilde{O}(\frac{1}{\epsilon^d}n^{1-\delta})$ time the $(1+\epsilon)$-approximate EMD.
\end{conjecture}
It is known through fine-grained reduction \cite{rohatgi2019conditional} that for any $\delta>0$ if a $(1 + \frac{1}{n^{\delta}})$-approximate EMD in dimensions $\omega(\log(n))$ of Euclidean space cannot be solved in $O(n^{2-\delta})$ time, then the Hitting Sets Conjecture \cite{abboud2016approximation} would be false. However this is separate from the finite dimensional case due to the dependency of $d$ on $n$. 

It is presumed that the smaller the dimension $d\geq 2$ that the $(1+\epsilon)$-approximate EMD problem on $\mathbb{R}^d$ would be easier to solve.
So there would be no contradiction that the constant dimension version of the problem is solvable in subquadratic time. 
We show below that assuming the hypothesis that the $(1+\epsilon)$-approximate EMD has an optimal near linear time lower bound complexity, then the $W_1$ distance between PDs has optimal lower bound complexity of near linear time. 
\begin{theorem}\label{thm: 1+epsilon-EMDlowerbound}
Let $n>0$ be an integer and let $\epsilon>0$

    If the $(1+\epsilon)$-approximate EMD on $\mathbb{R}^2$ and two point sets of size $n$ cannot be computed in time $\tilde{O}(\frac{1}{\epsilon^2}n^{1-\delta})$ for any $\delta>0$, then a $(1+\epsilon)$-approximate computation of $W_1$ between two persistence diagrams of total size $n$ cannot be computed in time $\tilde{O}(\frac{1}{\epsilon^2}n^{1-\delta'})$ for some $\delta'>0$
\end{theorem}
\begin{proof}
    We do a $(\tilde{O}(\frac{1}{\epsilon^2}n),\tilde{O}(\frac{1}{\epsilon^2}n))$-fine grained reduction from the $(1+\epsilon)$-approximate EMD for $d=2$ to the $W_1$ problem between persistence diagrams.
    
    \textbf{The Reduction (same as in the reduction of Theorem \ref{thm: EMDfgr2W1}): }
    
    Given an input $A,B \subseteq \mathbb{R}^2$, 
    
    1. Compute $\text{diam}(A\cup B)= \max_{x,y \in A \cup B} \|x-y
    \|_2$. This takes time $O(n)$ time.

    2. Compute the displacement vectors 
    \begin{equation}
        \mathcal{D}= \{(\|p-p_{proj}(p)\|_2, (p-p_{proj}(p))_x)\}_{p \in A \cup B}
    \end{equation}
    consisting of (magnitude, direction) pairs where $p_{proj}: \mathbb{R}^2 \rightarrow \Delta$ is the projection map to the diagonal $\Delta= \{(x,x): x \in \mathbb{R}\}$ as given in Section \ref{sec: reduction} and $(p)_x$ is the $x$-coordinate of point $p$. This takes time $O(n)$.
    
    3. Amongst all the vectors $(m,r) \in \mathcal{D}$ with $r=(p-p_{proj}(p))_x \leq {0}$ find the vector with the largest magnitude $m$, call this maximizer $m^*$. This takes time $O(n)$.
    
    4. Translate all points of point sets $A$ and $B$ by 
    \begin{equation}
        dd\triangleq (-(m^*+n(\epsilon+1) \text{diam}(A\cup B)),(m^*+n(\epsilon+1) \text{diam}(A\cup B)))
    \end{equation}
    Call these translated point sets $A_t, B_t$. This takes time $O(n)$.

    \textbf{The Reduction Maintains Correctness: }
    
    \textbf{$(1+\epsilon)$-approximate $EMD(A,B)$ iff $(1+\epsilon)$-approximate $W_1(A_t,B_t)$}

    Let \begin{equation}
    \begin{split}
    G_{t,proj}\triangleq (A_t \cup B_t\cup p_{proj}(A_t)\cup p_{proj}(B_t),\\ (A_t \cup B_t \cup p_{proj}(A_t)\cup p_{proj}(B_t))  \times (A_t \cup B_t \cup p_{proj}(A_t)\cup p_{proj}(B_t)), \hat{f}_t,c_t,\mu_t ) 
    \end{split}
    \end{equation} 
    and denote 
    \begin{equation}
        G\triangleq ((A\cup B),(A\cup B) \times (A\cup B), f,c,\mu)
    \end{equation} as transhipment networks for the $(1+\epsilon)$-$W_1(A_t,B_t)$ and $(1+\epsilon)$-$EMD(A,B)$ problems, respectively.

    $(\Rightarrow): $
    
    For a witness flow $f^*: G \rightarrow \mathbb{R}^+$ that pushes all $n$ units of flow from $A$ to $B$ that solves the $(1+\epsilon)$-approximate $EMD(A,B)$ problem, define $f_t:  G_{t,proj} \rightarrow \mathbb{R}^+$ as 
    \begin{equation}
        f_t((u+dd,v+dd))\triangleq f^*(u,v)
    \end{equation} 
    We claim that this flow $f_t$ obtains a $(1+\epsilon)$-approximate $W_1(A_t,B_t)$ distance. This means that it is the minimizer of the following distance:
    \begin{gather}
    \begin{split}
        \hat{W}_1(A_t,B_t)\triangleq \min_{\hat{f}_t: G_{t,proj} \rightarrow \mathbb{R}^+\text{ is a flow}}F_{c_t}(\hat{f}_t)
        \end{split}
    \end{gather}
    where: 
    \begin{equation}
        F_{c_t}(\hat{f}_t)\triangleq\sum_{(u_t,v_t) \in G_{t,proj}} c_t(u_t,v_t) \hat{f}_t(u_t,v_t) 
    \end{equation}
    satisfying:
    \begin{equation}
        W_1(A_t,B_t)\leq \hat{W}_1(A_t,B_t)\leq (1+\epsilon)W_1(A_t,B_t)
    \end{equation}
    We first notice that any flow $\hat{f}_t$ 
    cannot involve any flow to the diagonal. Similar to the proof above, we have that pushing flow towards the diagonal will increase the cost.  This can be expressed as:
    \begin{subequations}
    \begin{gather}
    \begin{split}
        F_{c_t}(\hat{f}_t)- c_t(p,q)\hat{f}_t(p,q)\rho \\+ \rho(\|p-p_{proj}(p)\|_2 \hat{f}_t(p,p_{proj}(p))+ \|q-p_{proj}(q)\|_2 \hat{f}_t(p_{proj}(q),q))
        \end{split}
    \end{gather}
    \begin{equation}\label{eq: lb-flow}
       \geq F_{c_t}(\hat{f}_t)
    \end{equation}
    \begin{equation}
        \text{ s.t. } \forall \rho: 0<\rho\leq 1,  \hat{f}_t(p,p_{proj}(p))+  \hat{f}_t(p_{proj}(q),q)= \hat{f}_t(p,q)
    \end{equation}
    \end{subequations}
    Where the inequality of Equation \ref{eq: lb-flow} comes from the following inequality on the projection distances. 
    \begin{subequations}
    \begin{equation}
         \|p-p_{proj}(p)\|_2 \hat{f}_t(p,p_{proj}(p))+ \|q-p_{proj}(q)\|_2 \hat{f}_t(p_{proj}(q),q) 
    \end{equation}
    \begin{equation}
        \geq ( (1+\epsilon)n \text{diam}(A_t\cup B_t)) \hat{f}_t(p,q) \geq \text{diam}(A_t\cup B_t))\hat{f}_t(p,q) \geq c_t(p,q)\hat{f}_t(p,q)
    \end{equation}
    \end{subequations}
    Let 
    \begin{equation}
    \begin{split}
    G_{t}\triangleq (A_t \cup B_t, (A_t \cup B_t )  \times (A_t \cup B_t), \hat{f}_t\mid_{G_t},c_t\mid_{(A_t \cup B_t )  \times (A_t \cup B_t)},\mu_t\mid_{A_t \cup B_t} ) 
    \end{split}
    \end{equation} 
    We know that $f_t$ does not involve flow to or from the diagonal.
    The flow ${f}_t$ also cannot be improved with flow to or from the diagonal. Furthermore, since $f_t$ is optimal on $G_t$, it must be that $f_t$ is the optimal solution for $\hat{W}_1$. 

    $(\Leftarrow): $

    We know that any $\hat{f}_t$ cannot be improved with any flow to or from the diagonal. Thus letting the flow $\hat{f}_t\mid_{G_t}=\hat{f}_t$ and optimizing $F_{c_t}(\hat{f}_t)$, we get that:
    \begin{equation}
        f_t((u+dd,v+dd))\triangleq f^*(u,v)
    \end{equation} 
    is an optimal solution.
\end{proof}
 \textbf{Approximate Nearest Neighbor Bound:}
 \label{sec: NNproblemandbound}
 Define the following problem using the solution to Problem \ref{prob: wasserstein}.
 
 \begin{problem}
 \label{prob: NN}
 Given PDs $A_1,\ldots,A_n$ and a query PD $B$,
    find the nearest neighbor (NN) $A^*= \mathrm{argmin}_{A_i\in\{A_1,\ldots,A_n\}}W_1(B,A_i).$
    \end{problem} 
    
    We obtain the following bound on the approximate NN factor of our algorithm, where a $c$-approximate nearest neighbor $A^{*}$ to query PD $B$ among $A_1...A_n$ means that $W_1(A^*,B) \leq c \cdot \min_{i} (W_1(A_{i},B))$.
    \begin{theorem}
    \label{theorem: approx-NN}
    
    Let $4+\frac{8}{\eps}= s \geq 12$. The nearest neighbor of PD $B$ among PDs $A_1,...A_n$ as computed by {\sc PDoptFlow} at sparsity parameter $s$ is a $\frac{(1+\eps)^2}{1-\eps}$-approximate nearest neighbor in the $W_1$-distance.
    
    \end{theorem}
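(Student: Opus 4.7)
The plan is to deduce the nearest-neighbor approximation factor directly from the two-sided bound implicit in Corollary~\ref{corollary: approximationbound}. First I would unpack what the corollary provides. The preceding discussion in Section~\ref{sec: bounds} shows that the node sparsification contributes a factor $(1\pm\tfrac{8}{s-4})=(1\pm\eps)$ to $W_1(A,B)$, while the arc sparsification (Theorem~\ref{theorem: Wasserstein-approx-thm}) contributes a one-sided factor $(1+\tfrac{4}{s}+\tfrac{4}{s-2})\leq(1+\eps)$ on top of the min-cost flow in $G_\delta$. Chaining these gives, for every pair of PDs $A,B$,
\begin{equation*}
(1-\eps)\,W_1(A,B)\;\leq\;\bar f^*(A,B)\;\leq\;(1+\eps)^2\,W_1(A,B),
\end{equation*}
where $\bar f^*(A,B)$ is the output of {\sc PDoptFlow} at sparsity $s$.

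Next I would carry out the standard NN approximation argument. Let $A^\star=\arg\min_i W_1(A_i,B)$ denote the true nearest neighbor, and let $\tilde A=\arg\min_i \bar f^*(A_i,B)$ be the PD that {\sc PDoptFlow} returns. By the upper bound applied to the pair $(A^\star,B)$ and then minimality of $\tilde A$ with respect to $\bar f^*$,
\begin{equation*}
\bar f^*(\tilde A,B)\;\leq\;\bar f^*(A^\star,B)\;\leq\;(1+\eps)^2\,W_1(A^\star,B).
\end{equation*}
Combined with the lower bound applied to the pair $(\tilde A,B)$, namely $(1-\eps)\,W_1(\tilde A,B)\leq \bar f^*(\tilde A,B)$, dividing through yields
\begin{equation*}
W_1(\tilde A,B)\;\leq\;\frac{(1+\eps)^2}{1-\eps}\,W_1(A^\star,B),
\end{equation*}
which is exactly the claimed $\frac{(1+\eps)^2}{1-\eps}$-approximate nearest neighbor guarantee.

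The only subtle point — and the closest thing to an obstacle — is justifying that the two-sided bound on $\bar f^*$ really holds with the specific constants $(1-\eps)$ and $(1+\eps)^2$ rather than the looser $(1+O(\eps))$ written in the corollary. For this I would explicitly invoke Proposition~\ref{prop: delta-condensation} for the lower side (which provides the cleaner factor $(1-\eps)$ directly) and Theorem~\ref{theorem: Wasserstein-approx-thm} together with $\tfrac{4}{s}+\tfrac{4}{s-2}\leq \tfrac{8}{s-4}=\eps$ for $s\geq 12$ for the upper side, so the chained product is truly at most $(1+\eps)^2$. Once that bookkeeping is in place, the rest is a one-line minimality argument and no heavier machinery is required.
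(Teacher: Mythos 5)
Your proposal is correct and follows essentially the same route as the paper: the paper's proof is exactly your chain written out as a single string of inequalities, using Proposition~\ref{prop: delta-condensation} for the $(1-\eps)$ lower side, the optimality of the returned neighbor with respect to the computed (sparsified) flow value for the swap step, and Corollary~\ref{corollary: approximationbound} (i.e., Theorem~\ref{theorem: Wasserstein-approx-thm} combined with $\tfrac{4}{s}+\tfrac{4}{s-2}\leq\tfrac{8}{s-4}$ for $s\geq 12$) for the $(1+\eps)^2$ upper side. Your "subtle point" about pinning down the exact constants is precisely the bookkeeping the paper performs in its last two displayed lines, so nothing is missing.
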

    \begin{proof}
    
    For a given $s$, define $\eps= \frac{8}{s-4}$ and an appropriate $\delta$ as in Proposition \ref{prop: delta-condensation}. Let $A'$ be the nearest neighbor according to {\sc PDoptFlow} at sparsity parameter $s$ and $B$ be the query PD. Let $f^*_{A',B}$ be the optimal flow between $A'$ and $B$ and let $f^*_{A'^{\delta},B^{\delta}}$ be the optimal flow on the pertrubed $\delta$-grid and let $f^s_{A',B}$ be the optimal flow between them on the sparsified graph with parameter $s$. Let $X$ be the union of all PDs of interest such as $A_1...A_n$ and $B$. Let $X^{\delta}$ be the perturbed grid obtained by snapping $X$. Let {\sc PDoptFlow}$_s$ denote the value of the optimal flow computed by {\sc PDoptFlow} for sparsity parameter $s$. We have that: 
    
    $W_1(A',B) = \sum_{(x,y) \in X\times X} f^*_{A',B} \cdot \|x-y\|_2$
    
    $\leq (\frac{1}{1-\frac{8}{s-4}}) \cdot \sum_{(x',y') \in X^{\delta} \times X^{\delta}} f^*_{A'^{\delta},B^{\delta}} \cdot \|x'-y'\|_2$ (lower bound from Proposition \ref{prop: delta-condensation})
    
    $\leq (\frac{1}{1-\frac{8}{s-4}}) \cdot \sum_{(x',y') \in \WS_s(\hat{A'}^{\delta},\hat{B}^{\delta})} f^s_{A'^{\delta},B^{\delta}} \cdot \|x'-y'\|_2$ (optimality of $f^*_{A'^{\delta},B^{\delta}}$)
    
    $= (\frac{1}{1-\frac{8}{s-4}}) \cdot$ {\sc PDoptFlow}$_s({A'}^{\delta}, B^{ \delta})$
    
    
    $\leq (\frac{1}{1-\frac{8}{s-4}}) \cdot \sum_{(x',y') \in \WS_s(\hat{A^*}^{\delta},\hat{B}^{\delta})} f^s_{{A^*}^{\delta},B^{\delta}} \cdot \|x'-y'\|_2$ (optimality of $A'$ w.r.t. {\sc PDoptFlow}$_s$)
    
    = $(\frac{1}{1-\frac{8}{s-4}}) \cdot$ {\sc PDoptFlow}$_s({A^*}^{\delta},B^{\delta})$
    
    
    $\leq (\frac{1}{1-\frac{8}{s-4}}) \cdot (1+\frac{8}{s-4}) \cdot (1+\frac{4}{s}+\frac{4}{s-2}) \cdot W_1(A^*,B)$ (by Corollary \ref{corollary: approximationbound})
    
    $ \leq \frac{(1+\eps)^2}{1-\eps} \cdot W_1(A^*,B)$ (if $4+\frac{8}{\eps}=s \geq 12$ and by Corollary \ref{corollary: approximationbound})
    \end{proof}

    This bound matches with our experiments described in Section \ref{sec: 1-NN} which show the high NN prediction accuracy of {\sc PDoptFlow}.

\section{Experiments} \label{sec: experiments}
\begin{figure*}[t]
\begin{minipage}{.33\textwidth}
  \centering
  \includegraphics[width=0.6\linewidth]{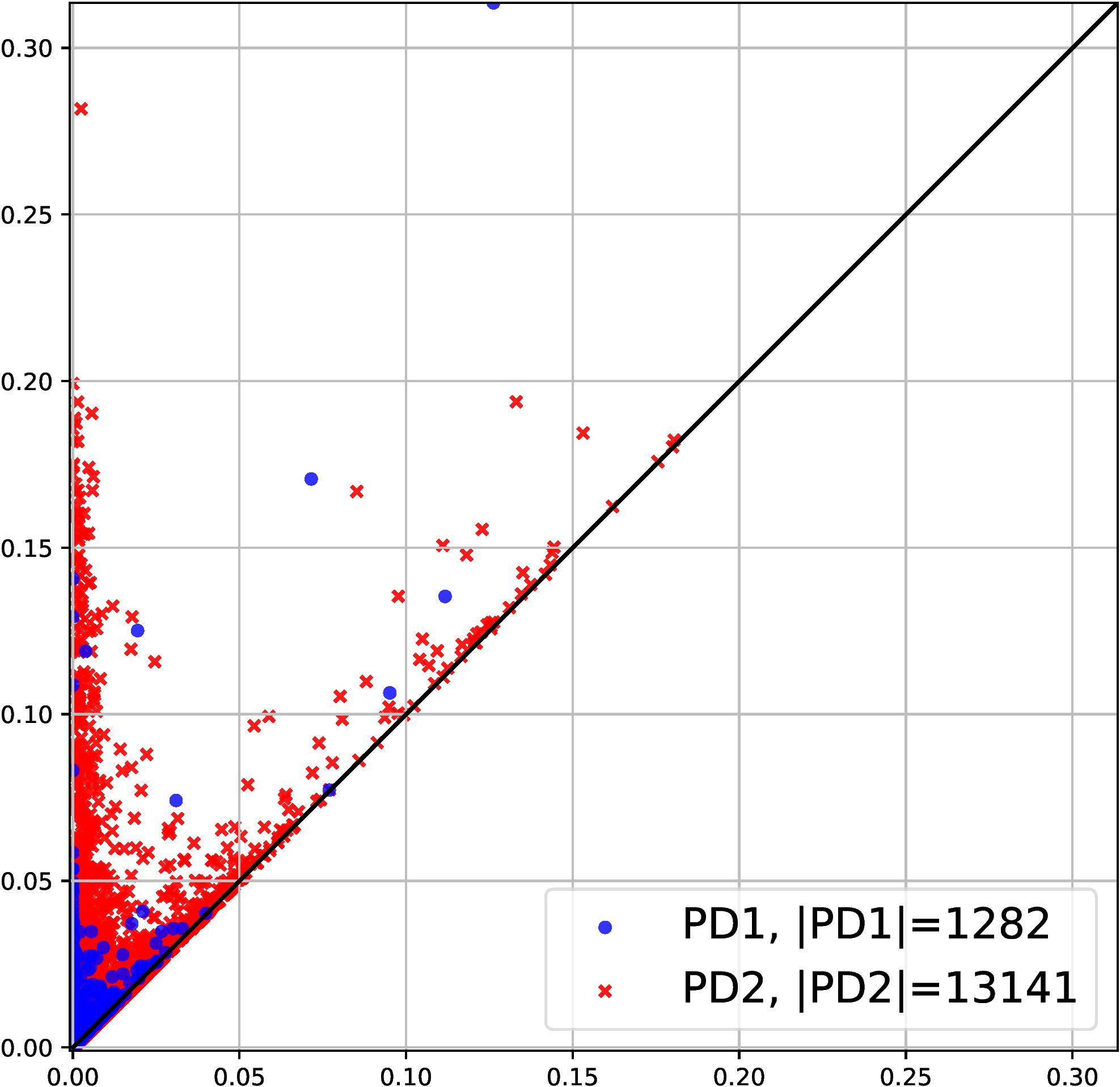}
  
  \subcaption{ PD1: {\sf Athens}, PD2: {\sf Beijing}}
  \label{fig: BeijingxAthens}
\end{minipage}%
\begin{minipage}{.33\textwidth}
  \centering
  \includegraphics[width=0.6\linewidth]{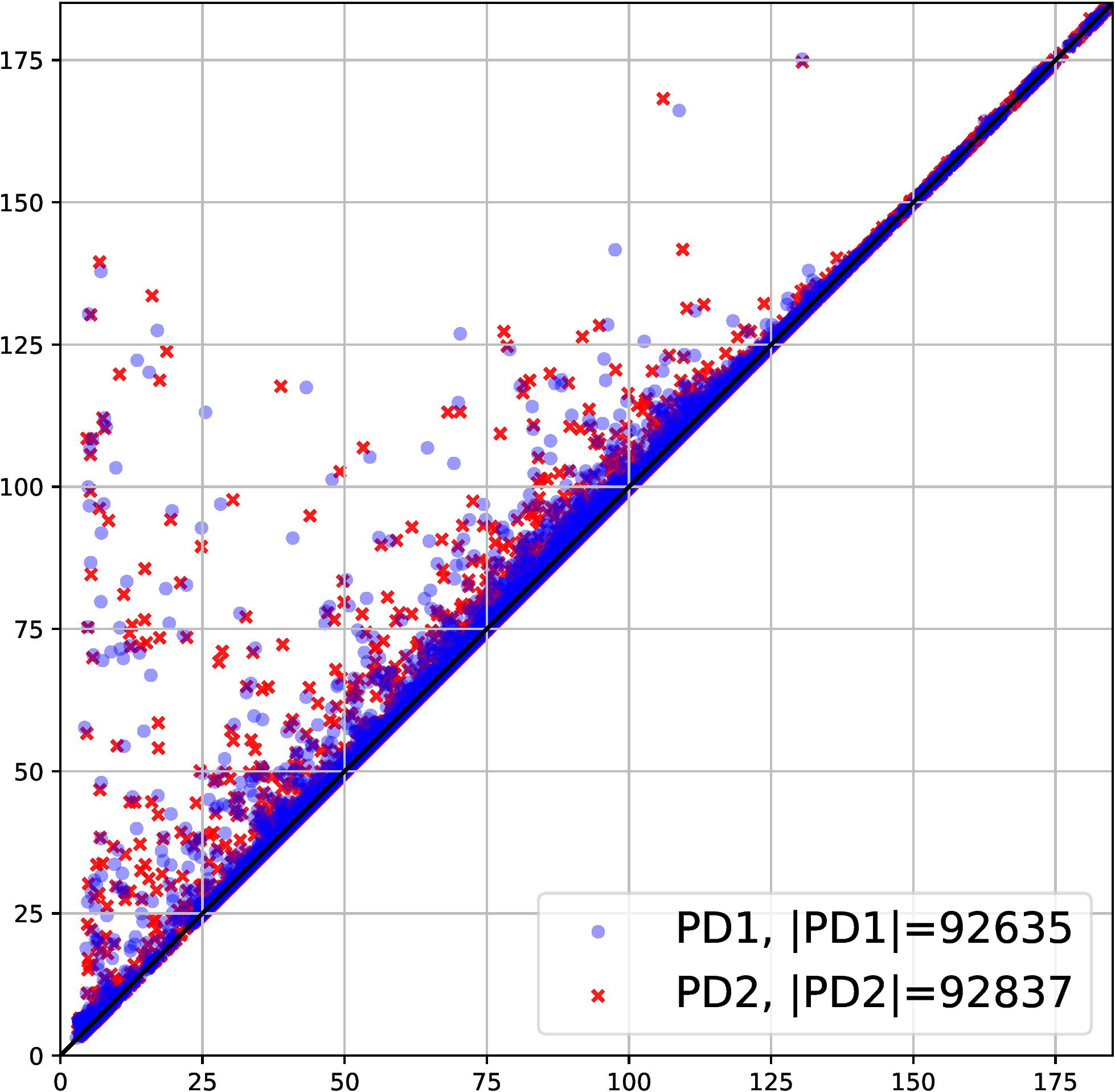}
  
  \subcaption{PD1: {\sf MRI750}, PD2: {\sf MRI751}}
  \label{fig: MRI750xMRI751-p}
  
\end{minipage}
\begin{minipage}{.33\textwidth}
  \centering
  \includegraphics[width=0.6\linewidth]{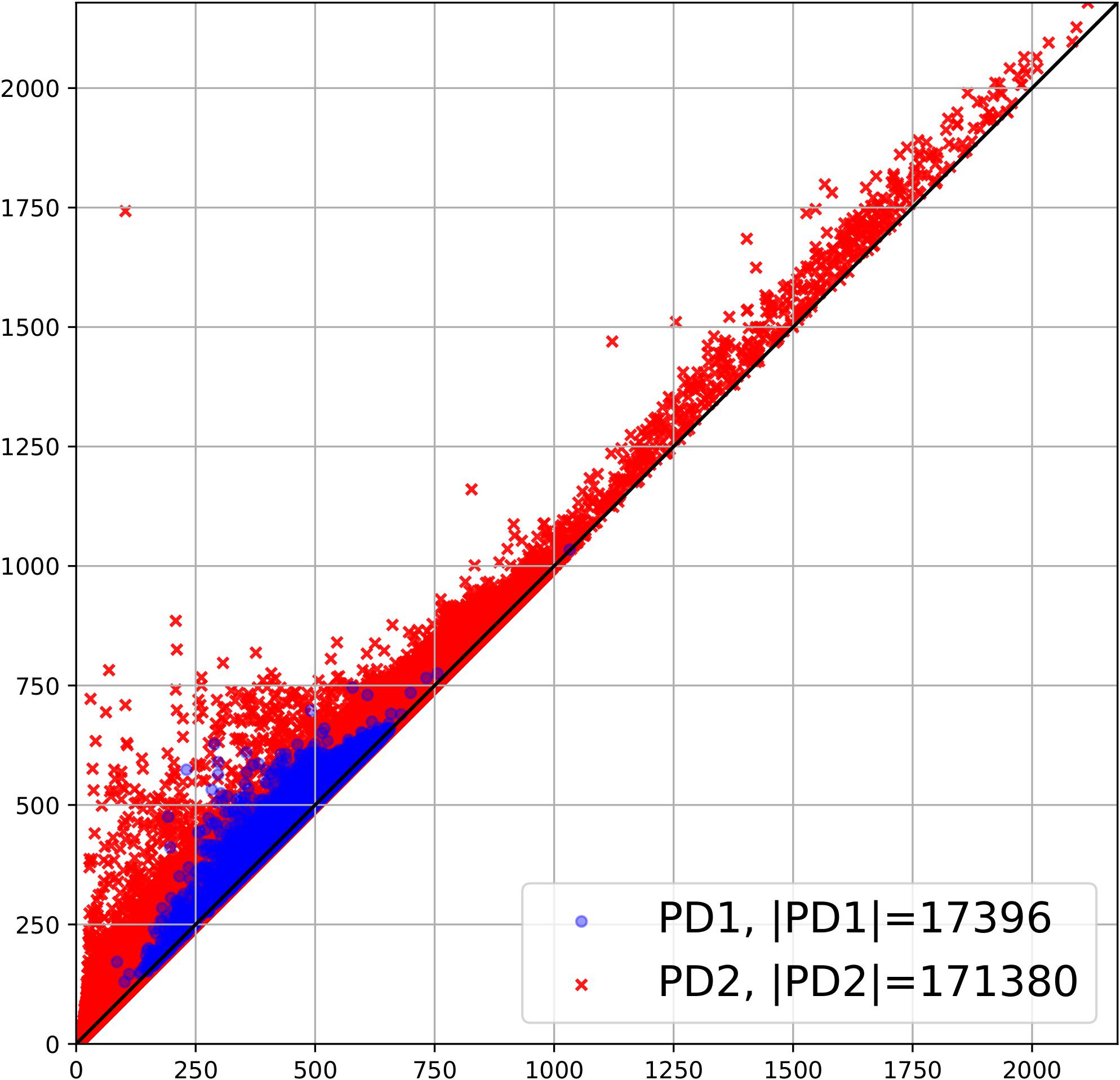}
  
  \subcaption{PD1: {\sf brain}, PD2: {\sf heart}}
  \label{fig: heartxbrain}
  \end{minipage}
\caption{Some of the persistence diagrams; PD1 is in blue and PD2 is in red.}
\label{fig: PDs}
\end{figure*}
All experiments are performed on a high performance computing platform \cite{Pitzer2018}. The node we use is equipped with an NVIDIA Tesla V100 GPU with 32 GB of memory. 
The node also has a dual Intel Xeon 8268 with a total of 48 cores where 300 GB of CPU DRAM is used for computing. Table~\ref{table: datasets} describes the persistence diagrams data
we used for all experiments.
    
\begin{table} [h]
\centering
\begin{tabular}{ |p{2.0cm}||p{2.0cm}|p{2.0cm}|p{5.0cm}|p{2.0cm}| }

 \hline
 \multicolumn{5}{|c|}{Datasets} \\
 \hline
 Name& Multiset Card.&  Unique Points & Type of Filtration& Orig. Data\\
 \hline
 {\sf Athens}  & 1281    &1226&   H0 lower star & csv image\\
 {\sf Beijing} &   13141  & 13046   &H0 lower star& csv image\\
 {\sf Brain} &17396&  17291& H1 low. star cubical& 3d vti file\\
 {\sf Heart}    &171380 & 171335&  H1 low. star cubical& 3d vti file\\
 {\sf MRI750} & 92635  &  92635 & H0 low. star pertb. & jpg img.\\
 {\sf MRI751} & 92837  & 92837 &H0 low. star pertb. & jpg img.\\
 {\sf rips1}&  31811  &  31811& H1 Rips&pnt. cloud\\
 {\sf rips2}& 38225  &38225& H1 Rips&pnt. cloud\\
 \hline
 Name & Avg. Card.& Avg. Card. & Type of Filtration & Orig. Data\\
 \hline
 {\sf reddit}& 278.55 & 278.55 & lower/upper star &graphs \\

 \hline
 \end{tabular}
 \caption{Datasets used for all experiments.}
 \label{table: datasets}
 \end{table} 
 
The {\sf Athens} and {\sf Beijing} (producing pair {\sf AB}) are real-world images taken from the public repository of \cite{dey2018graph}. {\sf MRI750} and {\sf MRI751} (producing pair {\sf mri}) are adjacent axial slices of a high resolution 100 micron brain MRI scan taken from the data used in \cite{edlow20197}. The images are saved as csv and jpeg files, respectively. The H0 barcodes of the lower star filtration are computed using ripser.py \cite{tralie2018ripser}. The MRI scans are perturbed by a small pixel value to remove any pixel symmetry from natural images. The {\sf brain} and {\sf heart} (producing pair {\sf bh}) 3d models are vti \cite{ahrens2005paraview} files converted from raw data and then converted to a bitmap cubical complex. The {\sf brain} and {\sf heart} raw data are from \cite{souza2018open} and \cite{andreopoulos2008efficient}.
 The H1 barcodes of the lower star filtration of the bitmap cubical complex are computed with GUDHI \cite{maria2014gudhi}. Datasets {\sf rips1} and {\sf rips2} (producing pair {\sf rips}) consist of 7000 randomly sampled points from a normal distribution on a 5000 dimensional hypercube of seeds 1 and 2 respectively from the numpy.random module \cite{harris2020array}. The Rips barcodes \cite{bauer2019ripser} for H1 are computed by {\sc Ripser++} \cite{zhang2020gpu-proceedings}. The reddit dataset is taken from~\cite{chen2021approximation} and is made up of 200 PDs built from the extended persistence of graphs from the {\sf reddit} dataset with node degrees as filtration height values.

The input to our algorithm contains the parameter $s$ with which we determine a $\delta$ for $\delta$-condensation and construct an $s$-WSPD. The larger the $s$ is, the smaller the average supply of each node in the transshipment network and the denser the network becomes since it has $O(s^2 n)$ number of arcs for $n$ points. Since there is a quadratic dependence on $s$, it is best to use $s\in (0,18]$ on a conventional laptop for memory capacity reasons. Figure~\ref{fig: sxrelerror-Wasserstein} shows the empirical dependence of the relative error $\eps'$ w.r.t. the parameter $s$.
To calculate a tighter theoretical bound $\eps$ on the relative error than Corollary \ref{corollary: approximationbound}, one can solve for $s$ from the expression $1+\eps=(1+\frac{4}{s}+\frac{4}{s-2})\cdot(1+ (\frac{8}{s-4}))$

In practice the algorithm performs very well in both time and relative error with $s < 12$, see Figure \ref{fig: sxrelerror-Wasserstein} and Table \ref{table: 1-wasserstein-stats}
. Compared to {\sc flowtree} \cite{chen2021approximation}, {\sc PDoptFlow} is surprisingly not that much slower for $n \sim 100K$ and $s\leq 1$ (a very low sparsity factor) and has a smaller relative error. 
Since the {\sc flowtree} algorithm only needs one pass through the tree, it is very efficient. On the other hand, our algorithm depends on the cycle structure of the sparsified transshipment network. 
The relative error of {\sc PDoptFlow} may heavily depend on the amount of $\delta$-condensation; see {\sf bh}, for example.

The $\delta$-condensation can significantly change the number of nodes in the transshipment network. From the graph $G(A,B)$, the number of nodes in $W_{40}^{PD}(A^\delta,B^\delta)$ can drop by $90\%$, $82\%$, $70\%$, and $2\%$ for the {\sf bh}, {\sf AB}, {\sf mri}, and {\sf rips} comparisons respectively. The great variability is, we presume, determined by the clustering of points in the PDs when the $n$ is not dominating the pooling operation. 

We discussed in Section \ref{sec: deltacond-assumption} that as long as the clustering has some spread to it, the $\delta$-condensation is effective. However, in our experiments, we noticed that having dense points is effective. We presume that this is because the number of points is acting insignificant in scale relative to the distances for the solution. We would expect that for example in the \textsc{rips} dataset, due to the curse of dimensionality, the range of pairwise distances for a random point cloud in high dimensions (5000) is much greater than the distribution of $2^8$ pixel values of a natural image that $\delta$-condensation would be effective. However, we notice that there is almost no clustering of filtration values for the \textsc{rips} dataset, see Table \ref{table: datasets} and Table \ref{table: 1-wasserstein-stats-condense}. Perhaps 60K points is too few to make a difference. In cases like these, $\delta$-condensation is not effective and just a spanner instead so that one may get a tighter theoretical approximation bound. More condensation results in higher empirical relative errors and less computing time. Since only the theoretical relative error is known before execution, we compare times for a given theoretical relative error bound as in Table \ref{table: 1-wasserstein-results}.

\begin{figure*}[t]
\centering
		\includegraphics[width=1.0\columnwidth]{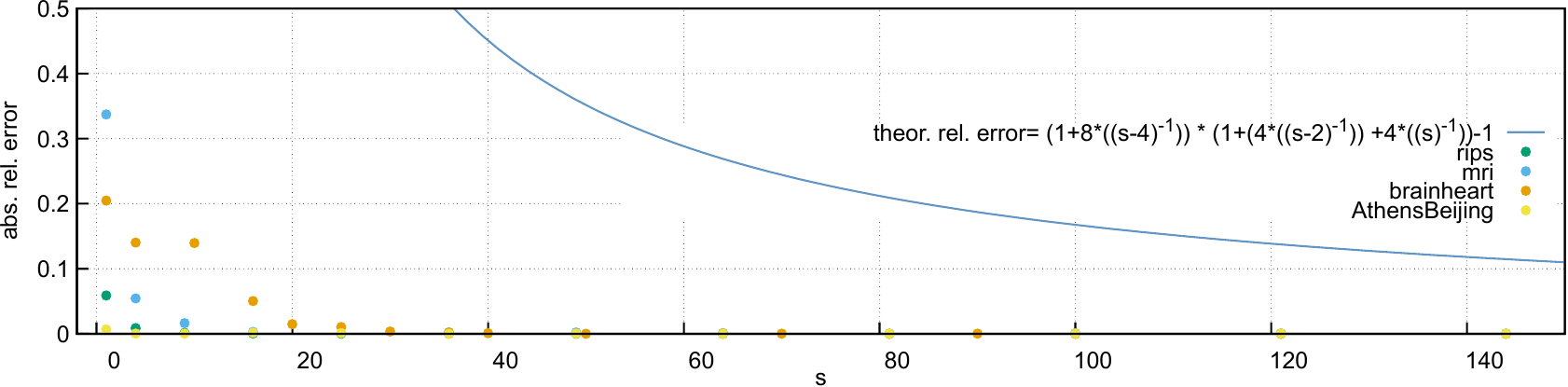}
		\caption{Convergence of {\sc PDoptFlow} for $W_1$-distance against the parameter $s$. }
		\label{fig: sxrelerror-Wasserstein}
		\centering
	\end{figure*}

    \begin{table}[h] 
    \centering
\begin{tabular}{ |p{2.0cm}|||p{3.0cm}|p{3.0cm}||p{3.0cm}|p{3.0cm}|}
 \hline
 \multicolumn{5}{|c|}{$W_1$ Empirical Errors for a given Theoretical Error } \\
 \hline
 
 PD data sets&Emp. Err. $s=40$ (Ours) & Emp. Err. $\eps=0.5$({\sc hera}) &Emp. Err. $s=93$ (Ours)& Emp. Err. $\eps=0.2$({\sc hera}) \\
 
 \hline
 
 {\sf bh} & 0.00093 &0.00028 &  0.00014
& 0.000280 \\
 {\sf AB} & 0.00043& 0.00101& 8.6e-5 & 0.000233 \\
 {\sf mri} &0.00224 & 0.00373 &0.00077& 0.001315\\
 {\sf rips} &0.00011&0.00689& 3.4e-5&0.001770 \\
 \hline
 \end{tabular}
 \caption{Empirical relative error of {\sc PDoptFlow} and {\sc hera}.}
 \label{table: 1-wasserstein-stats}
 \end{table}
 
 \begin{table}[h] 
\begin{tabular}{ |p{3.0cm}||p{6.0cm}|p{6.0cm}|}
 \hline
 \multicolumn{3}{|c|}{$W_1$-Distance Computation Stats. for a Guaranteed Rel. Error Bound} \\
 \hline
 
 PD data sets& \%node drop, (\#nodes, \#arcs) for  $W_{40}^{PD}(A^\delta,B^\delta)$ & \%node drop, (\#nodes, \#arcs) for $W_{90}^{PD}(A^\delta,B^\delta)$ \\
 
 \hline

 {\sf bh} & 90\%,(18K,22M) &
86\%,(25K,92M)\\
 {\sf AB} & 82\%,(2.5K,1.4M) &70\%,(4.3K,6.1M) \\
 {\sf mri} & 70\%,(55K,57M) & 67\%,(60K,188M)\\
 {\sf rips} & 2\%,(68K,133M)&0.3\%,(69K,468M) \\
 \hline
 \end{tabular}
 \caption{$\delta$-condensation statistics. K: $\times 10^3$, M: $\times 10^6$.}
 \label{table: 1-wasserstein-stats-condense}
 \end{table}

    \subsection{Nearest Neighbor Search Experiments}
    \label{sec: 1-NN}

    We perform experiments in regard to Problem \ref{prob: NN}. NN search is an important problem in machine learning \cite{backurs2020scalable,chen2009similarity}, content based image retrieval ~\cite{lew2006content}, in high performance computing \cite{xiao2016parallel, nene1997simple} and recommender systems~\cite{roumani2007finding}.
    We use the dataset given in \cite{chen2021approximation} which consists of 200 PDs coming from graphs generated by the {\sf reddit} dataset. 
    Having established ground truth with the guaranteed $0.01$ approximation of {\sc hera}, we proceed to find the nearest neighbor for a given query PD. Following \cite{backurs2020scalable}, we consider various approximations to the $W_1$-distance. We experiment with $6$ different approximations: the Word Centroid Distance ({\sc WCD}), {\sc RWMD} \cite{kusner2015word}, {\sc quadtree}, {\sc flowtree} \cite{chen2021approximation}, {\sc PDoptFlow} at $s=1$ and {\sc PDoptFlow} at $s= 18$ for a guaranteed $2.3$ factor approximation. The {\sc WCD} lower bound is achieved with the observation in \cite{chen2021approximation}. Table \ref{table: 1-NN-accuracy-time} shows the prediction accuracies and timings of all approximation algorithms considered on the {\sf reddit} dataset. Sinkhorn or dense network simplex are not considered in our experiments because they require $O(n^2)$ memory. This is infeasible for large PDs in general.
    
    Although {\sc PDoptFlow} is fast for the error that it can achieve, the computational time to use {\sc PDoptFlow} for all comparisions is still too costly, however. This suggests combining the 7 considered algorithms to achieve high performance at the best prediction accuracy. One way of combining algorithms is through pipelining, which we discuss next.
        \begin{figure}[h]
\includegraphics[width=0.4\columnwidth]{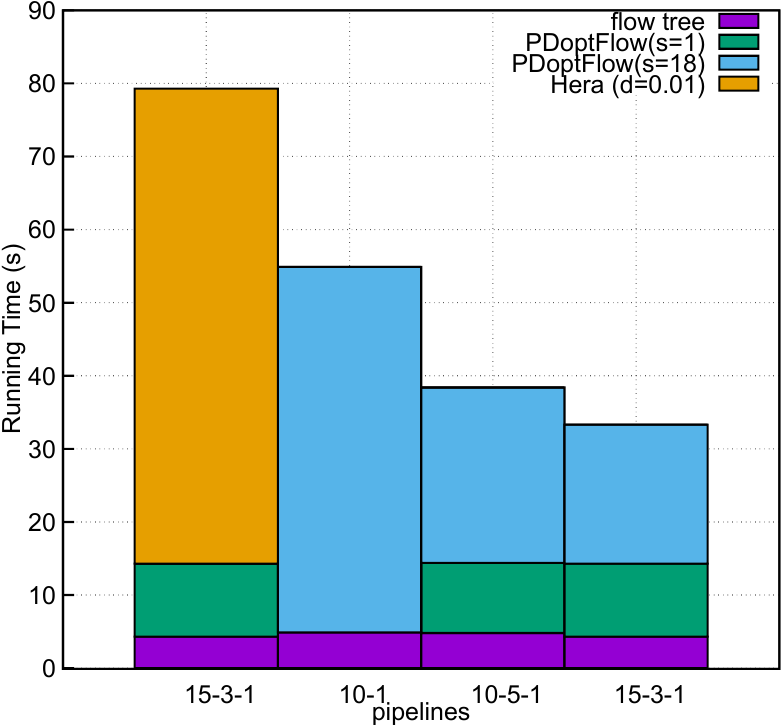}
\centering
		\caption{Pipelines for computing NN.}
		\label{fig: pipelines}
		\centering
	\end{figure}
	
    \textbf{Pipelining Approximation Algorithms}:
    Following \cite{backurs2020scalable} and using a distance to compute a set of candidate nearest neighbors, we pipeline these algorithms in increasing order
    of their accuracy to find the $1$-NN with at least $90\%$ accuracy. A  pipeline of $k$ algorithms is written as $c_1-c_2-\cdots -c_k$ where $c_i$ is the number of output candidates 
    of the ith algorithm in the pipeline. 
    
    Since {\sc flowtree} achieves a better accuracy than {\sc RWMD} and {\sc WCD} in less time, we can eliminate {\sc WCD} and {\sc RWMD} from any pipeline experiment. This is illustrated by {\sc WCD} and {\sc RWMD} not being on the Pareto frontier in Figure \ref{fig: pareto-boundary}. The {\sc quadtree} algorithm is not worth placing into the pipeline since its accuracy is too low; it prunes the NN as a potential output PD too early. It also can only save on {\sc flowtree}'s time, which is not the bottleneck of the pipeline. In fact, the last stage of computation, which can only be achieved with a high accuracy algorithm such as {\sc hera} or {\sc PDoptFlow}, always forms the bottleneck to computing the NN. 
    
    Figure \ref{fig: pipelines} shows four pipelines involving {\sc flowtree}, {\sc PDoptFlow} and {\sc hera}. The 15-3-1 pipeline consisting of {\sc Flowtree} then {\sc PDoptFlow(s=1)} and then {\sc PDoptFlow(s=18)} was found to be the best in performance through grid search. Three other pipelines computed in the grid search are shown. Each pipeline computes 100 queries with at least $90\%$ accuracy for a random split of the {\sf reddit} dataset. We measure the total amount of time it takes to compute all 100 queries. For the pipeline 15-3-1 with {\sc hera} replacing {\sc PDoptFlow(s=18)}, {\sc hera} takes 65 seconds on 3 queries, while {\sc PDoptFlow} takes 19 seconds on 3 queries. We find that $82\%$ of the time is spent on only $3\%$ of the PDs for {\sc hera}, while $57\%$ of the time is spent if {\sc PDoptFlow}($s=18$) replaces {\sc hera}. We notice that {\sc flowtree} is able to eliminate a large number of candidate PDs in a very short amount of time though it is not able to complete the task of finding the NN due to its low prediction accuracy. {\sc PDoptFlow}($s=1$) surprisingly achieves very good times and prediction accuracies without an approximation bound. 

	 \section{Conclusion}
We propose a new implementation for computing the $W_1$-distances
between persistence diagrams that provides a $1+O(\eps)$ approximation. We achieve a considerable speedup for a given guaranteed relative error in computation by two algorithmic and implementation design choices. First, we exploit geometric structures effectively via $\delta$-condensation and $s$-WSPD, which sparsify the nodes and arcs, respectively, when comparing PDs. Second, we exploit parallelism in our methods with an implementation in GPU and multicore. Finally, we establish the effectiveness of the proposed approaches in practice by extensive experiments. Our software {\sc PDoptFlow} can achieve an order of magnitude speedup over other existing software for a given theoretical relative error. Furthermore, {\sc PDoptFlow} overcomes the computational bottleneck to finding the NN amongst PDs and guarantees an $O(1)$ approximate nearest neighbor. One merit of our algorithm is its applicability beyond comparing persistence diagrams. The algorithm is in fact applicable to an unbalanced optimal transport problem on $\mathbb{R}^2$ upon viewing $\bar{b}$ and $\bar{a}$ as creator/destructor and reassigning the diagonal arc distances to the creation/destruction costs.

\bibliography{ms}

\appendix

\clearpage

   \begin{figure*}[h]
    \centering
		\includegraphics[width=0.9 \columnwidth]{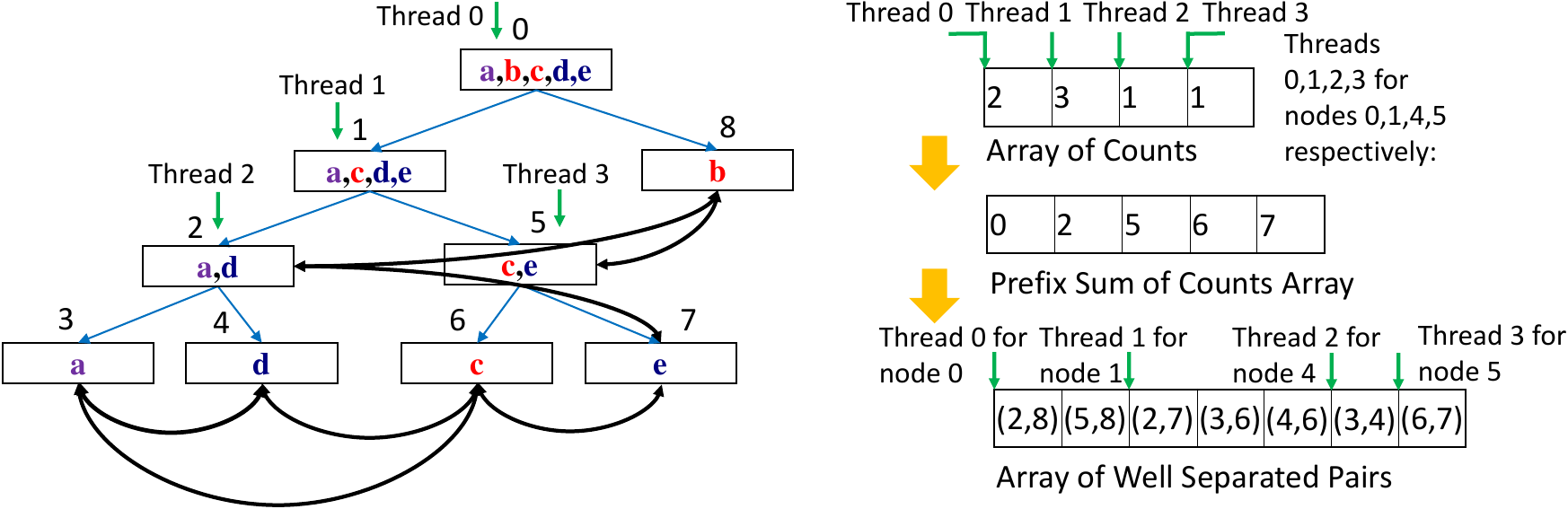}
		\caption{constructing WSPD in parallel for array from the split tree}
		\label{fig: prefixsum-wspd}
		\centering
	\end{figure*}
\section{Appendix}

Here we present the datasets, algorithms, finer implementation details, more experiments, and discussions that are not presented in the main body of the paper due to the space limit.

\subsection{More Algorithmic Details:}
Here we present the algorithmic and implementation details that are omitted in the main context of the paper.

\subsubsection{{\sc WCD}:}

We implement the {\sc WCD} using the Observation in \cite{chen2021approximation} that $OT(A\cup \tilde{B}, B\cup\tilde{A})\leq 2 W_1(A,B)$, where $OT(A\cup \tilde{B}, B\cup\tilde{A})$ is the classical optimal transport distance between $A\cup \tilde{B}$ and $B\cup\tilde{A}$, the sum of distances of their optimal matching, is a $2$ approximation to $W_1(A,B)$. Since {\sc WCD}$(A\cup \tilde{B},B\cup \tilde{A})\leq \text{OT}(A\cup \tilde{B},B\cup \tilde{A})$, we get $\frac{1}{2}$ {\sc WCD}$ \leq W_1(A,B)$. 
    
    {\sc flowtree} is faster than {\sc WCD} on {\sf reddit} due to the small scale of the PDs in that dataset. However asymptotically {\sc WCD} is much faster on very large datasets since it can be implemented as a $O(\log n)$ depth sum-reduction of coordinates on GPU, similar to {\sc quadtree}.

\subsubsection{WSPD and Spanner Construction:}

	Here we present our simplified parallel algorithm for WSPD construction used in our implementation. 
	The purpose of traversing the split tree twice is to parallelize writing out the WSPD, the bottleneck to constructing a WSPD. Although the WSPD is linear in $n$, the number of nodes of $\WS_s^{PD}(A^{\delta},B^{\delta})$, in practice the size of the WSPD is several orders of magnitude larger than $n$. Thus writing out the WSPD requires a large amount of data movement. Algorithm \ref{alg: wspd-count} first finds the number of pairs written out by a thread rooted at some node in the split tree. The computation of counts is in parallel and is mostly arithmetic. Once the counts are accumulated, a prefix sum of the counts is computed and written out to an offsets array. The offsets are then used as starting memory addresses to write out the WSPD pairs for each thread in parallel.

	Figure \ref{fig: prefixsum-wspd} illustrates the parallel computation of the WSPD. The prefix sum is computed over the counts determined by each thread. There is a thread per internal node.
	
	In our implementation, we do not actually keep track of the point subsets for each node of the split tree. Instead, we keep track of a single point in each point subset $P \subset \hat{A} \cup \hat{B}$ as well as a bounding box of $P$. This constructs the non-diagonal arcs of $\WS_s^{PD}(A,B)$ with minimal data.

\begin{algorithm}[h]

\begin{algorithmic}[1]

\Require{$\pmb{T}$ a split tree, WSPD parameter $s$}

\Ensure{$s$-$WSPD$ represented by wspd-ptn-pairs as an array}

\State counts $\gets \{ 0...0\}$ \Comment{allocate O(n) elements}
        
			\For{node $w \in \pmb{T}$ in parallel}
			\State count-WSPD$(tid(w),$w.left,.w.right,$s$,counts) 
			\EndFor

			\State offsets $\gets$ prefix-sum(counts)
	        
	        \State $L$= offsets[w] \Comment{offsets[w]=sum(counts)}
			
			\State wspd-ptn-pairs $\gets \{ ...\}$ \Comment{allocate $L$ elements for wspd-ptn-pairs: $O(s^2 n)$ memory}
			
			\For{node $w \in \pmb{T}$ in parallel}
			\State construct-WSPD(tid(w),w.left,w.right,s,offsets, wspd-ptn-pairs)
			\EndFor
\end{algorithmic}
\caption{Construct $s$-WSPD-biarcs in parallel}
\label{alg: wspd-construction}
\end{algorithm}

\begin{algorithm}[h]

\begin{algorithmic}[1]

	\Function{count-WSPD}{}
	\Require{$tid$: thread id; 
nodes $u$ and $v$ in the split tree; $s$: WSPD parameter; counts: the number of recursive calls made by each thread; 
}
\Ensure{counts: array of counts, counts$[tid]$= number of pairs each thread will find}
            \If{$u$ is $s$-well separated from $v$}
            \State counts$[tid]$++
            
            \Return\Comment{keep track of the number of well separated pairs associated with $tid$}
            \EndIf
            \If{$\text{max\_len(BndingBx(u))}>\text{max\_len(BndingBx(v))}$}
            \State $\text{count-WSPD}(tid,u.\text{left}, v,s,$counts)
            
            \State $\text{count-WSPD}(tid,u.\text{right}, v,s,$counts)
            \Else
            \State $\text{count-WSPD}(tid,u, v.\text{left},s,$counts)
            
            \State $\text{count-WSPD}(tid,u, v.\text{right},s,$counts) 
            \EndIf
            \EndFunction
            \end{algorithmic}
            \caption{Compute WSPD thread counts for offsets}
            \label{alg: wspd-count}
\end{algorithm}

\begin{algorithm}[H]

\begin{algorithmic}[1]

	\Function{construct-WSPD}{}
	
	\Require{$tid$: 
thread id; nodes $u$ and $v$ in the split tree; $s$: WSPD parameter; offsets: 
wspd: a writable array of point pairs;
}
\Ensure{$s$-WSPD with representatives of point pairs as an array}
	
	\If{$u$ is $s$-well separated from $v$}
	
	\State $wspd[$offsets$[tid]$++] $\gets (u.\text{point},v.\text{point})$ \Comment{all threads write in parallel}
            
            \Return
            \EndIf
	
	\If{max\_len(BndingBx($u$))$>$max\_len(BndingBx)}
	
	    \State $\text{construct-WSPD}(tid,u.\text{left}, v,s,$wspd)
        
        \State $\text{construct-WSPD}(tid,u.\text{right}, v,s,$wspd)
        \Else
        
        \State $\text{construct-WSPD}(tid,u, v.\text{left},s,$wspd)
            
        \State $\text{construct-WSPD}(tid,u, v.\text{right},s,$wspd)
	\EndIf
	\EndFunction
	\end{algorithmic}
\caption{Write out WSPD from offsets}
\label{alg: wspd-write}
\end{algorithm}

Algorithm \ref{alg: W1-form-diagonal-arcs} shows how to write out the diagonal arcs for $\WS_{s}^{PD}(A^{\delta},B^{\delta})$. On line 2 it states that there is a parallelization by prefix sum on arc counts. This computation is similar to the algorithm for WSPD construction. The number of arcs per point is kept track of. A prefix sum is computed after this and the diagonal arcs are written out per point.

\begin{algorithm}
\begin{algorithmic}[1]
\For{point $p \in \pmb{P}= \hat{A}_{\delta} \cup \hat{B}_{\delta}$ parallelized by prefix sum on count of arcs incident on each $p$}
            \If{p is from $\hat{A}_{\delta}$}
            diag-arcs $\gets$ diag-arcs $\cup \{p,p_{proj}\}$
            \EndIf
            \If{p is from $\hat{B}_{\delta}$}
            diag-arcs $\gets$ diag-arcs $\cup \{p_{proj},p\}$
            \EndIf
    \EndFor
\end{algorithmic}
\caption{Form diagonal arcs}
\label{alg: W1-form-diagonal-arcs}

\end{algorithm}

\begin{figure*}[h]
		\includegraphics[width=1.0\columnwidth]{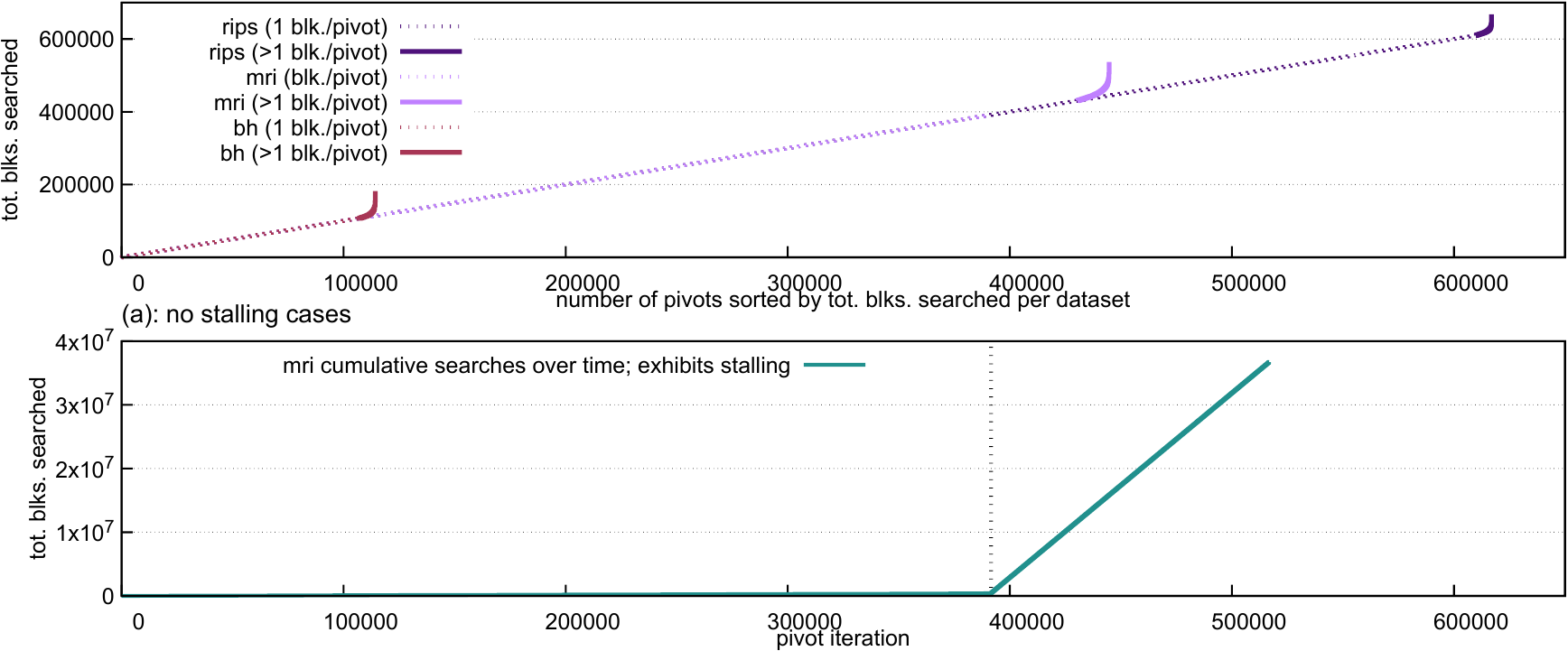}
		\centering
		\caption{(a) Plot of no stalling case of the cumulative distribution of blocks searched for {\sf rips}, {\sf mri} and {\sf brain-heart} datasets. (b) Plot of a stalling case for the {\sf mri} dataset. block size= $\sqrt{m}$}
		\label{fig: sorted-cumulativedist-blocks}
		\centering
	\end{figure*}
\subsubsection{Representing the Transshipment Network:}
The data structure used to represent the transshipment network significantly affects the performance of network simplex algorithm. Since most of the time of computation is spent on the network simplex algorithm and not the network construction stage, the network data structure is designed to be constructed to be as efficient for arc reading and updating as possible. A so-called static graph representation~\cite{dezsHo2011lemon}, essentially a compressed sparse row (CSR) \cite{eisenstat1984new} format matrix, is used to represent the transshipment network. Thus in order to build a CSR matrix, we must sort the arcs $(u,v)$ first by first node followed by second node in case of ties. This sorting can be over several millions of arcs, see Table \ref{table: 1-wasserstein-results} column 2. For example, for rips at $\eps' \leq 0.2$, 468 million arcs must be sorted. ($\eps'$ is the guaranteed relative error bound). For a sequential $O(m \log m)$ algorithm, this would form a bottleneck to the entire algorithm before network simplex, making the algorithm $\Omega(m \log (m))$. Thus we sort the arcs on GPU using the standard parallel merge sorting algorithm~\cite{blelloch2010parallel,cole1988parallel} and achieving a parallel depth complexity of $O(\log m)$.

\subsection{Computational Behavior of Network Simplex (BSP in practice):}
\label{sec: N.S. behavior}
Refer to Section \ref{sec: experiments} and Table \ref{table: datasets} for dataset information. Figure \ref{fig: sorted-cumulativedist-blocks} shows two very different computational patterns of the block search pivot based {\sc NtSmplx} algorithm. Figure \ref{fig: sorted-cumulativedist-blocks}(a) shows the vast majority of cases when there is no stalling. We show the cumulative distribution of blocks searched for the {\sf rips}, {\sf mri} and {\sf brain-heart} datasets at s=20, 49 and 150 respectively. The block sizes are set to the square root of the number of arcs; the block sizes are 6539, 9134 and 8922 respectively. Notice that 98.9\%, 96.8\% and 93.8\% of the pivots involve only a single block being searched, and account for 91.4\%, 80.2\% and 58.8\% of the total blocks searched. Although the pivots are sorted per dataset by the number of blocks searched, the cumulative distribution depending on the pivots computed over execution is almost identical. Figure \ref{fig: sorted-cumulativedist-blocks}(b) shows the relatively rare but severe case of stalling for the {\sf mri} dataset at s=36, stopped after 10 minutes. 
Stalling begins at the 391559th arc found. 

\begin{figure*}[h]
		\includegraphics[width=1.0
		\columnwidth]{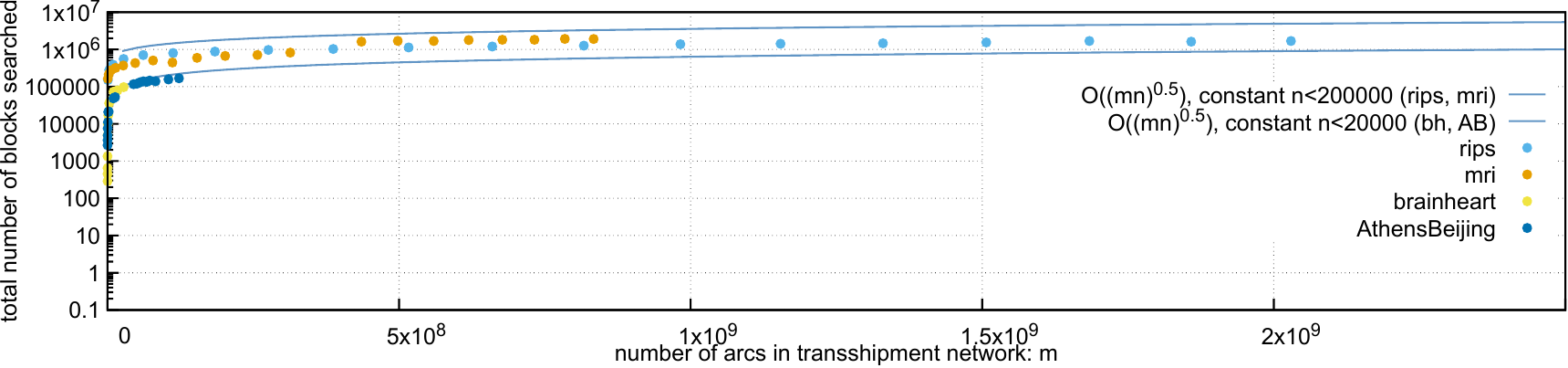}
		\centering
		\caption{Plot of the total number of block pivot searches depending on the number of arcs.}
		\label{fig: arcsXtotalblockssearched}
		\centering
	\end{figure*}

Furthermore, we have noticed empirically that repeated tie breaking of reduced costs during pivot searching results in a tendency to stall. In fact, most implementations simply repeatedly choose the smallest indexed arc for tie breaking. After applying lattice snapping by $\pi_{\delta}$, symmetry is introduced into the pairwise relationships and thus results in many equivalent costs on arcs and subsequent reduced costs. This is why we introduce a small perturbation to the snapped points in order to break this symmetry. This results in much less stalling in practice.    

\subsubsection{Parallelizing Network Simplex Algorithm:}
\label{sec: parallel-NS}
Network simplex is a core algorithm used for many computations, especially exact optimal transport. This introduces a natural question: can we directly parallelize some known network simplex pivot search strategies and gain a performance improvement? We attempted to implement parallel pivot search strategies such as a $O(\log m)$-depth parallel min reduction over all reduced costs on either GPU or multicore, such as in \cite{ploskas2014gpu}. These approaches did not improve performance over a sequential block pivot search strategy. There was speedup over Dantzig's pivot strategy, where all admissible arcs are checked, however. For GPU, there is an issue of device to host and host to device memory copy. These IO operations dominate the pivot searching phase and are several of orders of magnitude slower than a single block searched sequentially from our experiments. Recall that most searches result in a single block by Figure \ref{fig: sorted-cumulativedist-blocks}. For multicore, there is an issue of thread scheduling which provides too much overhead. In general, it is very difficult to surpass the performance of a sequential search over a single block when the block size can fit in the lower level cache due to the two aforementioned issues. For example, in our experiments the cache size is 28160KB, which should hold $6 \cdot B \cdot 8$ bytes for $B= \sqrt{m}$, the block size, and $m< 3 \times 10^{11}$ where 6 denotes the 6 arrays needed to be accessed to compute the reduced cost and 8 is the number of bytes in a double. This bound on $m$, the number of arcs, should hold for almost all pairs of conceivable input persistence diagrams and $s>0$ in practice. This does not preclude, however the possibility of efficient parallel pivoting strategies completely since stalling still exists for the sequential block search algorithm.

\subsubsection{Empirical Complexity:} 
\label{sec: empirical complexity}

\begin{figure*}[h]
\includegraphics[width=1.0\columnwidth]{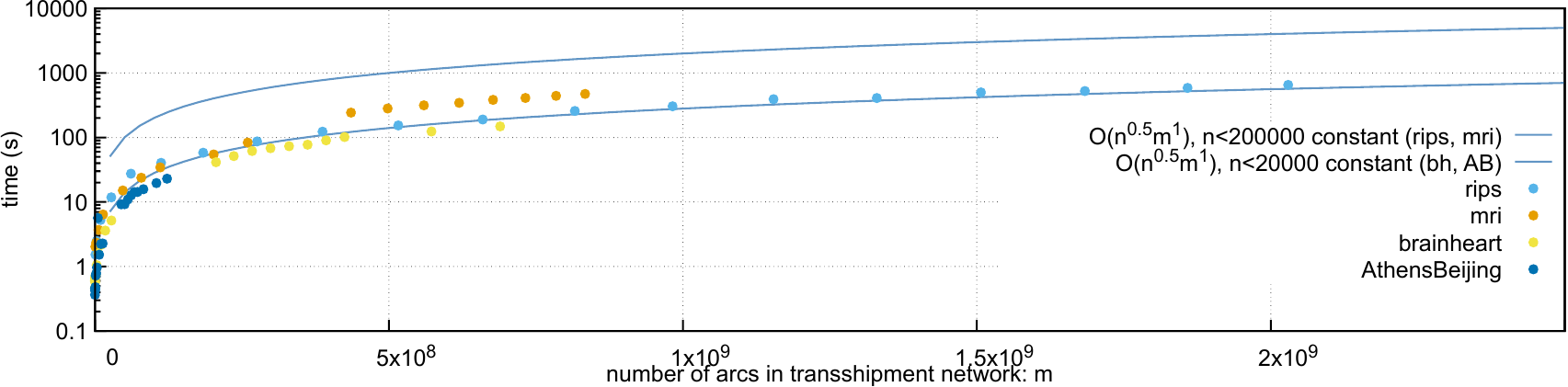}
\centering
		\caption{Plot of the empirical time (log  scale) depending on the number of arcs of the sparsified transshipment network for each dataset. $n$ is the number of nodes.} 
		\label{fig: arcsXtime}
		\centering 
	\end{figure*}
	
	For each of the datasets from Table \ref{table: datasets}, our experiments illustrated in Figure \ref{fig: arcsXtime}, show that for varying $s$ and fixed $n$, our overall approach runs empirically in $O(\sqrt{n}m)$ time, where $m=s^2 n$ with $s$ the WSPD parameter and $n$ the number of nodes in the sparsified transshipment network.

	Here we explain in more detail the experiment illustrated in Figure \ref{fig: nXtime}. We determine the empirical complexity with respect to the number of points on a synthetic Gaussian dataset. These are not real persistence diagrams and are made up of points randomly distributed on the plane above the diagonal. The points follow a Gaussian distribution. For fixed $s\leq40$, as a function of $n$ our algorithm empirically is upper bounded by $O(s^2n^{1.5})$. This is determined through upper bounding the least squares curve fitting. 
	Since we are still solving a linear program, it should not be expected that the empirical complexity can be truly linear, except perhaps under certain dataset conditions. The proximity of points, for certain real persistence diagrams, for example, could be exploited more by $\delta$-condensation. We notice that the empirical complexity is better, the smaller the $s$, including for $s\leq 40$. This is why in Section \ref{sec: experiments} {\sc PDoptFlow} for $s=1$ performs so much faster than {\sc PDoptFlow} for $s=18$. 

\subsubsection{Stopping Criterion:}
\label{sec: stoppingcriterion}
Due to the rareness of stalling for given $s$ in practice, our stopping criterion is designed to justify the empirical time bound. If the block size is $\sqrt{m}$, the computation goes like $O(s^2n^{1.5})$, and each iteration within a block search determines the time, $O(\frac{s^2n^{1.5}}{s\sqrt{n}})= O(\sqrt{mn})$ blocks is an upper bound on the number of searched blocks when there is no stalling. Figure \ref{fig: arcsXtotalblockssearched} illustrates this relationship amongst $m$, $n$ and the time. Thus the stopping criterion is set to $C \sqrt{mn}+b$. In practice, $C$ may simply be set to $0$ and $b$ set to a large number however it has been empirically found that the stopping citerion goes like $\sqrt{mn}$ blocks for a large number of the various types of real persistence diagrams such as those generated by the persistence algorithm on lower star filtrations induced by images and rips filtrations on random point clouds, to name the types from the experiments.

\subsubsection{Bounds on Min-Cost Flow:}
 The $W_1$-distance between PDs is a special case of the unbalanced optimal transport (OT) problem as formulated in \cite{lacombe2018large,sato2020fast}. 
 Solving such a problem exactly using min-cost flow is known to take cubic complexity~\cite{orlin1993faster} in the number of points. However, affording cubic complexity is usually infeasible in practice and thus we seek a subcubic solution.

 There are several approaches to approximating the  distance between PDs with $n$ total points. 
 In \cite{chen2021approximation}, a $\log A$ approximation is developed, where, $A$ is the aspect ratio, adapting the work of \cite{indyk2003fast} and \cite{backurs2020scalable} for persistence diagrams. In \cite{kerber2017geometry}, the auction matching algorithm performs a $(1+\eps)$ approximation, also lowering complexity by introducing geometry into the computation. Geometry lowers a linear search over $O(n)$ points for nearest neighbors to $O(\sqrt{n})$ via kd-tree. This does not lower the theoretical bound below $O(n^{2.5})$, however. Our approach lowers complexity by introducing a geometric spanner \cite{cabello2005matching}, using a linear number of arcs between points. 
 
  Min-cost flow algorithms can have theoretically very low complexity. The input to min-cost flow is a transshipment network and its output is the minimum cost flow value. Let $m$ be the number of arcs in the transshipment network and $n$ its number of nodes. It was shown that min-cost flow can be found exactly in $\tilde{O}(m+n^{1.5})$ complexity in \cite{brand2021minimum}, by network simplex in $\tilde{O}(n^2)$ complexity, in parallel in $\tilde{O}(\sqrt{m})$ and approximated on undirected graphs in \cite{bernstein2021deterministic} in $\tilde{O}(m^{1+o(1)})$ complexity.
 
 Since the number of nodes and arcs of the transshipment network depend directly on the points and pairwise distances respectively, an implication of using a geometric spanner for $(1+O(\eps))$ approximation is that the complexity becomes theoretically subcubic and requiring $O(h(\frac{1}{\epsilon})n)$ memory, $h$ a constant degree polynomial. In fact this bound is actually achieved in practice. We show the empirical complexity is actually similar to $O(s^2n^{1.5})$ as shown in Figure \ref{fig: arcsXtime} and Figure \ref{fig: nXtime} but only for small $s$. 
 


\end{document}